\newcommand{\mlabel}[1]{\label{#1}}
\newcommand{\myskip}{\bigskip\noindent}
\newcommand{\eqclass}[1]{[{#1}]} %
\newcommand{\vald}{\mathsf{D}} %
\newcommand{\valdi}{\mathsf{D}'}
\newcommand{\valdii}{\mathsf{D}''}
\newcommand{\valfx}{\mathsf{F}} %
\newcommand{\valf}{\mathsf{F}'} %
\newcommand{\empt}{\ominus} %
\newcommand{\valv}{v} %
\newcommand{\setv}{\mathbb{V}} %
\newcommand{\setn}{\mathbb{N}} %
\newcommand{\setd}{[\vald]} %
\newcommand{\fsreplacement}[3]{{#1}_{[{#2}\rightarrow{#3}]}} %
\newcommand{\parent}{\mathord{\shortuparrow}} %
\newcommand{\parentf}[1]{\parent({#1})} %
\newcommand{\fsbroken}{\bot} %
\newcommand{\FS}{\Phi} %
\newcommand{\aFS}{\,\FS} %
\newcommand{\nn}{n'} %
\newcommand{\pn}{\parentf{n}} %
\newcommand{\cbrk}{\epsilon_\varnothing} %
\newcommand{\caaa}[3]{\langle{[#1],#2,#3}\rangle}
\newcommand{\caaaa}[3]{\langle{[#1],#2,#3}\rangle}
\newcommand{\cbba}[1]{\caaa{\empt}{\empt}{#1}}
\newcommand{\cbfa}[1]{\caaa{\empt}{\valfx}{#1}}
\newcommand{\cbda}[1]{\caaa{\empt}{\vald}{#1}}
\newcommand{\cfba}[1]{\caaa{\valfx}{\empt}{#1}}
\newcommand{\cfda}[1]{\caaa{\valfx}{\vald}{#1}}
\newcommand{\cdba}[1]{\caaa{\vald}{\empt}{#1}}
\newcommand{\cdfa}[1]{\caaa{\vald}{\valfx}{#1}}
\newcommand{\cdda}[1]{\caaa{\vald}{\vald}{#1}}
\newcommand{\cxyaa}[1]{\caaa{x}{y}{#1}}
\newcommand{\cxynv}{\caaa{x}{y}{n}}
\newcommand{\cxynnv}{\caaa{x}{y}{\nn}}
\newcommand{\cxypnv}{\caaa{x}{y}{\pn}}
\newcommand{\cxwnv}{\caaa{x}{w}{n}}
\newcommand{\czwnv}{\caaa{z}{w}{n}}
\newcommand{\czwnnv}{\caaa{z}{w}{\nn}}
\newcommand{\czwpnv}{\caaa{z}{w}{\pn}}
\newcommand{\czwmv}{\caaa{z}{w}{m}}
\newcommand{\cqrmv}{\caaa{q}{r}{m}}
\newcommand{\cqrov}{\caaa{q}{r}{o}}
\newcommand{\cc}{\,}
\newcommand{\descendant}{\prec} %
\newcommand{\descendantEq}{\preceq}
\newcommand{\ancestor}{\succ} %
\newcommand{\ancestorEq}{\succeq}
\newcommand{\eqext}{\sqsubseteq} %
\newcommand{\eqnrw}{\sqsupseteq} %
\newcommand{\strext}{\sqsubset} %
\newcommand{\strnrw}{\sqsupset} %
\newcommand{\nequiv}{\not\equiv}
\newcommand{\indep}{\mathrel{\wr\wr}} %
\newcommand{\unrel}{\indep} %
\newcommand{\nindep}{\mathrel{\centernot{\wr\wr}}} %
\newcommand{\nunrel}{\nindep} %
\newcommand{\worksmsign}{\vartriangleleft} %
\newcommand{\worksmeqsign}{\trianglelefteq} %
\newcommand{\worksm}[2]{{#1}\worksmeqsign{#2}} %
\newcommand{\worksmbx}[2]{\worksm{\{{#1}\}}{#2}}
\newcommand{\worksmbb}[2]{\worksm{\{{#1}\}}{\{{#2}\}}}
\newcommand{\wrks}[1]{\cbrk\strext{#1}} %
\newcommand{\wrksx}[1]{{#1}\strnrw\cbrk} %
\newcommand{\worksmnil}[1]{{\cbrk}\worksmsign{#1}} %
\newcommand{\worksmnilb}[1]{{\cbrk}\worksmsign{\{{#1}\}}} %
\newcommand{\emptyseq}{\lambda} %
\newcommand{\orderrel}{\ll}
\newcommand{\ordersetsign}{\vec{P}}
\newcommand{\orderset}[1]{\ordersetsign({#1})}
\newcommand{\ordsign}{\vec{c}} %
\newcommand{\ords}[1]{\ordsign\{{#1}\}} %
\newcommand{\ordp}[1]{\ordsign\,({#1})} %
\newcommand{\seqset}[1]{\mathcal{#1}} %
\newcommand{\sqs}[1]{\seqset{#1}}
\newcommand{\recchar}[2]{\mathcal{R}({#1}|{#2})}
\newcommand{\reca}{\recchar{A}{B}} %
\newcommand{\recb}{\recchar{B}{A}} %
\newcommand{\mynegsp}{\nobreak\hspace{-1pt plus 1pt}\nobreak}
\newcommand{\acbnp}{A\mynegsp\cap\mynegsp B} %
\newcommand{\acb}{\ordp{\acbnp}} %
\newcommand{\ambnp}{A\mynegsp\setminus\mynegsp B}
\newcommand{\amb}{\ordp{\ambnp}}
\newcommand{\bmanp}{B\mynegsp\setminus\mynegsp A}
\newcommand{\bma}{\ordp{\bmanp}}
\newcommand{\Dom}[1]{\textrm{Dom}({#1})}
\newcommand{\whr}{\mathrel{|}}
\newcommand{\rot}[3]{#3#2#1} %
\newcommand{\undersc}{\texttt{\detokenize{_}}}
\theoremstyle{plain}
\newtheorem{myth}{Theorem}
\theoremstyle{definition}
\newenvironment{mydef}{\pushQED{\qed}\mydefaux}{\popQED\endmydefaux}
\theoremstyle{plain}
\newtheorem{mylem}[myth]{Lemma}
\theoremstyle{plain}
\newtheorem{mycor}[myth]{Corollary}
\theoremstyle{plain}
\newtheorem{myclm}[myth]{Claim}
\theoremstyle{plain}
\newtheorem{myaxproof}{Rule}
\newcommand{\axaxseparatecommute}{
Commands on incomparable nodes commute:
$\cxynv\cc\czwmv \equiv \czwmv\cc\cxynv$
when $n\unrel m$.
}
\newcommand{\axaxseparatenobreaks}{
Commands on incomparable nodes also do not break all filesystems:
$\wrksx{\cxynv\cc\czwmv}$
when $n\unrel m$.
}
\newcommand{\axaxsamebreaks}{
Commands on the same node break every filesystem if their types are incompatible:
$\cxynv\cc\czwnv \equiv \cbrk$
when $[y]\ne [z]$.
}
\newcommand{\axaxsameemptyseq}{
Commands on the same node are extended by an empty sequence if their types are compatible,
and their outer types represent an assertion command:
$\cxynv\cc\czwnv \eqext \emptyseq$ 
when $[y]=[z]$, and
$[x]$ and $[w]$ are either both $[\empt]$ or both $[\vald]$.
}
\newcommand{\axaxsamesinglec}{
Command pairs on the same node are equivalent to a single command if their types are compatible,
and their outer types do not represent an assertion command:
$\cxynv\cc \czwnv \equiv \cxwnv$
when $[y]=[z]$, and
$[x]$ and $[w]$ are neither both $[\empt]$ nor both $[\vald]$.
}
\newcommand{\axaxdirectchildbreaks}{
\begin{sloppypar}
Commands on a parent and a child node break every filesystem
if the pair is not a construction pair
and the commands do not simply assert a directory at the parent
or an empty node at the child:
$\cxypnv\cc\czwnv \equiv \cbrk$ when $\cxypnv\cc\czwnv$ is not a construction pair, 
$\cxypnv\neq\cdda{\pn}$ and $\czwnv\neq\cbba{n}$.
\end{sloppypar}
}
\newcommand{\axaxdirectparentbreaks}{
\begin{sloppypar}
Commands on a child and parent node break every filesystem
if the pair is not a destruction pair
and the commands do not simply assert an empty node at the child
or a directory at the parent:
$\cxynv\cc\czwpnv \equiv \cbrk$ when $\cxynv\cc\czwpnv$ is not a destruction pair,
$\cxynv\neq\cbba{n}$ and $\czwpnv\neq\cdda{\pn}$.
\end{sloppypar}
}
\newcommand{\axaxdistantrelbreaks}{
Commands on distant relatives break all filesystems
if the commands do not simply assert a directory at the ancestor
or an empty node at the descendant:
$\cxynnv\cc\czwnv \equiv \cbrk$
and $\czwnv\cc\cxynnv \equiv\cbrk$
when $\nn\descendant n$ and $\nn\neq\parent(n)$, and $\cxynnv\neq\cdda{\nn}$ and $\czwnv\neq\cbba{n}$.
}
\newcommand{\axaxchildassert}{
An assertion command can be freely added on a descendant node
next to a command that does not simply assert a directory:
\[ \cbba{n}\cc\cxynnv \equiv \cxynnv \equiv \cxynnv\cc\cbba{n} \] 
when $\nn\descendant n$ and $\cxynnv\neq\cdda{\nn}$.
}
\newcommand{\axaxparentassert}{
An assertion command can be freely added on an ancestor node
next to a command that does not simply assert an empty node:
\[ \cdda{\nn}\cc\cxynv \equiv \cxynv \equiv \cxynv\cc\cdda{\nn} \]
when $\nn\descendant n$ and $\cxynv\neq\cbba{n}$.
}
\newcommand{\axaxassert}{
All assertion commands can be removed as the empty sequence extends them:
$\cdda{n}\eqext\emptyseq$ and $\cbba{n}\eqext\emptyseq$.
}
\title{Algebraic File Synchronization: Adequacy and Completeness}
\author{Elod Pal Csirmaz\\
\texttt{\rot{\rot{maz.}{csir}{ep}com}{@}{elod}}}
\date{}
\begin{document}
\maketitle
\begin{abstract}

With distributed computing and mobile applications,
synchronizing diverging replicas of data structures is a more and more common problem.
We use algebraic methods to reason about filesystem operations, 
and introduce a simplified definition of conflicting updates to filesystems.
We also define algorithms for update detection and reconciliation
and present rigorous proofs that they not only work as intended,
but also cannot be improved on.

To achieve this, we introduce a novel, symmetric set of filesystem commands
with higher information content,
which removes edge cases
and increases the predictive powers of our algebraic model.
We also present a number of generally useful classes and properties
of sequences of commands.

While the results are often intuitive,
providing exact proofs for them is far from trivial.
They contribute to our understanding of this special type of algebraic model,
and toward building more complete algebras
of filesystem trees 
and extending algebraic approaches to other data storage protocols.
They also form a theoretical basis for
specifying
and guaranteeing the error-free operation
of applications that implement an algebraic approach to synchronization.

\myskip
{\bf Keywords:} %
file synchronization,
algebraic approach,
distributed systems,
reconciliation,
data replication

\myskip
{\bf MSC classes:} %
08A70, %
68M07, %
68M14, %
68P05, %
68P20 %

\myskip
{\bf ACM classes:} %
D.4.3, %
E.5, %
F.2.2, %
G.2 %

\end{abstract}

\section{Introduction}

Synchronization of data structures
is a mechanism behind many services we take for granted today:
accessing and editing calendar events, documents and spreadsheets
on multiple devices, version control systems and
geographically distributed internet and web services that
guarantee a fast and responsive user experience.

In this paper we investigate a command-based approach
to synchronizing filesystem trees.
Presented with multiple copies (replicas) of the same filesystem that have been independently modified,
we regard the aim of the synchronizer to modify these replicas further
to make them as similar as possible.
The synchronization algorithm we describe
follows the two main steps described by Balasubramaniam and Pierce \cite{BP}:
update detection, where we identify modifications that have been applied to the replicas
since they diverged and represent them as sequences of commands;
and reconciliation, where we identify commands that can be propagated to other replicas and do so.
The remaining commands, which could not be propagated, represent conflicting updates;
resolving these requires a separate system or human intervention.

We extend significantly the results of the previous work done
with Prof. Norman Ramsey
on an algebraic approach to synchronization
\cite{NREC},
and add to the theoretical understanding
of synchronization by providing rigorous proofs that the algorithms
we describe work as intended
and offer a solution in our framework that cannot be improved on.

A central problem of command-based synchronizers 
during both update detection and reconciliation
is the ordering (scheduling) of commands.
If update detection is based on comparing the original state of the filesystem
to its new state, then we can easily collect a set of updates,
but we need to order these in a way that they could be applied
to the filesystem without causing an error 
(for example, a directory needs to be created before a file can be created under it).
Similarly, during reconciliation when we consider all the commands that
have been applied to the different replicas, we need to find a way of
merging these and creating a global ordering that we can apply
to the filesystems.
In fact, an insight in \cite{NREC} is that
if the ordering changes the effect of two commands,
then under certain circumstances they are not compatible and will give rise to conflicts.
Accordingly, in this paper command pairs, their commutativity and other properties
play a significant role.

In terms of the usual classification of synchronizers \cite{TSR, PV, SSH},
our approach to reconciliation is therefore operation-based as we investigate sets of commands,
while the update detection algorithm makes the full synchronizer 
similar to state-based synchronizers 
as its starting point is the states of the replicas themselves.
Moreover, our ordering algorithms only depend on properties of commands 
such as commutativity or idempotence,
and therefore they can be classified as semantic schedulers.
This is in opposition to syntactic scheduling, which uses 
the origin and the original temporal order of updates, 
and which may give rise to otherwise resolvable conflicts \cite{SSH}.
In general, ordering plays a central role in operation-based synchronizers.
We refer to an excellent survey by Saito and Shapiro of so-called optimistic replication algorithms \cite{SSH};
and, as examples,
to IceCube, where multiple orders are tested to find an acceptable one (\cite{KRSD}, see also \cite{MPV}),
or Bayou, where reconciling updates happens by redoing them in a globally determined order \cite{TTPDSH}.

The approach presented in this paper offers an improvement
over results described in \cite{NREC} in multiple ways.
We introduce a new set of commands that is symmetric and
captures more information about the updates,
and we restrict information content in directories
to exploit a further hidden symmetry in the command set.
These have the effect that reasoning about commands
becomes simpler as there are fewer edge cases,
and more powerful as 
predictions made by the reasoning are more accurate
due to the additional information content.
In fact, the new command set not only simplifies
our reconciliation algorithm, but also makes it maximal.
\cite{NREC} also lacked proofs that the update detection
and reconciliation algorithms it presented work as intended,
which we provide in the current paper.
While the results are intuitive, providing rigorous proofs
is far from trivial.
During the process, we define a number of auxiliary concepts
and show their relationships and the properties they possess.
In our view these construct a special algebraic model
that is worthy of interest and of further research on its own.

The paper is organized as follows.
We start by defining a model of filesystems and a set of commands
that we will use to describe updates and modifications in Section\nobreakspace \ref {sec_def}.
This is followed by investigating the properties and behaviors
of command pairs in Section\nobreakspace \ref {section_axioms}.
Section\nobreakspace \ref {sec_update} describes update detection and ordering filesystem commands.
The main result of this section is that under some simple conditions,
a set of filesystem commands executed in any feasible order leads
to the same changes in the filesystem.
The reconciliation and conflict detection algorithm is defined in
Section\nobreakspace \ref {sec_rec}, which merges two sets of commands.
We then proceed to prove that the output of reconciliation
is correct in the sense that 
it can be applied to the replicas without causing an error,
and it is maximal
inasmuch as no further updates can be applied under any circumstances.
In order to be able to do this, we introduce
domains of sets of command sequences,
and show that it has a number of highly convenient properties.
Sections\nobreakspace  \ref {sec_multidir} to\nobreakspace  \ref {sec_conclusion} 
describe extending our model to widen its applicability;
outline directions for further work on the introduced algebraic system,
and provide the conclusion.
We discuss related work, comparing our results to other research in Appendix\nobreakspace \ref {sec_relatedwork},
while
Appendix\nobreakspace \ref {axiom_proof} contains technical proofs for propositions on the behavior of command pairs.

\section{Definitions}\mlabel{sec_def}

\subsection{Filesystems}

We model filesystems using functions with a set of potential filesystem paths or \emph{nodes} ($\setn$) as their domain,
and a set of possible contents or values ($\setv$) as their codomain.
$\setn$  serves as a namespace for the filesystem:
it contains all possible nodes, including the ones 
where the file system contains no file or directory,
and so for most filesystems it is infinite.
Nodes in $\setn$ are also arranged in trees by a parent function ($\parent$) as described in Definition\nobreakspace \ref {def_noderel}.

\begin{mydef}[Filesystems, $\FS$]
A filesystem $\FS$ is a function
mapping the set of nodes $\setn$ to values in $\setv$:
\[ \FS: \setn \rightarrow \setv. \]
\end{mydef}

The tree-like structure of $\setn$ is determined by
an ancestor / descendant relation defined over the nodes
which arranges them in a disjoint union of rooted directed trees,
and which can be derived from the partial function $\parent$ yielding
the parent of a node in $\setn$ provided it exists.
Tao et al. \cite{TSR} describe a similar filesystem model, although
they also model inodes and restrict the filesystem to just a single tree.

\begin{mydef}[Ordering on $\setn$: $\parent$, $\descendant$, $\descendantEq$, $\unrel$]\mlabel{def_noderel}
The partial function $\parent:\setn\nrightarrow\setn$
returns the parent node of $n$,
and is undefined if $n$ is the root of a tree.

The \emph{ancestor} / \emph{descendant} relation $\descendant$ is the
strict partial ordering determined by the $\parent$ function.
We write $n\descendant m$, or $n$ is the ancestor of $m$,
iff $n=\parent^i(m)$ for some integer $i\ge 1$.
We write $n\descendantEq m$ iff $n\descendant m$ or $n=m$.

We write $n\unrel m$, or $n$ and $m$ are \emph{incomparable},
iff $n\not\descendantEq m$ and $n\not\ancestorEq m$;
that is, incomparable nodes are on different branches or on different trees.
\end{mydef}
By assumption the $\parent$ function does not induce loops, and so
$\descendant$ is indeed a strict partial ordering.

The combination of the set of nodes and the parent function,
$\langle \setn, \parent \rangle$, forms a \emph{skeleton}
which the filesystems populate with values from $\setv$.
As we require filesystem functions to be total,
we use a special value, $\empt\in\setv$, to indicate that the filesystem
is empty at a particular node, that is, there are no files or directories there.
We also assume that a filesystem has finitely many non-empty nodes,
and we consider any kind of metadata to be part of the values in $\setv$.

\begin{mydef}[Filesystem values: $\vald$, $\valfx$, $\empt$, $\eqclass{v}$]
The set of values $\setv$ is partitioned into directories, files, and the empty value $\empt$.
As usual, we write $[v]$ for the the equivalence class of $v\in\setv$ according to this partition,
which represents the type of the value.
Specific directory and file values are denoted by $\vald$ and $\valfx$, respectively,
and so the partitioning is in fact $[\vald], [\valfx], [\empt]$.
\end{mydef}

Every filesystem must have a so-called \emph{tree property}, which means that
if the filesystem is not empty at a node, and the node has a parent,
then there must be a directory at the parent node.
Using the notation introduced above, we can formally express this as follows.
\begin{mydef}[Tree property]\mlabel{def_treeprop}
A filesystem $\FS$ has the tree property iff
\[ \forall n\in\setn:
\FS(n) \neq \empt \: \Longrightarrow \: \Big[\FS\big(\parentf{n}\big)\Big] = [\vald] \]
wherever $\parent(n)$ is defined, that is, $n$ has a parent.
\end{mydef}

An essential additional assumption in our model is that there is only
a \emph{single directory value} ($|\setd|=1$).
The realization that
this creates a symmetry between directories and empty nodes
and makes it possible to exploit a hidden symmetry in filesystem commands
that allows formulating a correct and complete reconciliation algorithm
is one of the main contributions of this paper.
In Section\nobreakspace \ref {sec_multidir} we explore the applicability of our model given this assumption;
specifically,
why applicability may not be affected and ways to relax this restriction.

We also assume that there are multiple different file contents, that is, $|[\valfx]|>1$.

In the rest of paper
we fix the skeleton $\langle\setn,\parent\rangle$,
and the set of values $\setv$.
$\FS$ (with or without indices) denotes a filesystem,
and $n$, $m$ and $o$ are nodes in $\setn$.

\subsection{Commands on Filesystems}

Next, we define commands on filesystems.
As described above, we aim to select a set of commands
that captures as much information
about the operations as possible, and is also symmetric.
We start by summarizing our reasons for doing so.

Let us consider what kind of information is usually encoded in filesystem operations.
A minimal set of commands, based on the most frequent tools implemented by filesystems,
may be the following, where 
$n\in\setn$ and $\valv\in\setv$ (but $\valv\neq\empt$):
\begin{itemize}
\item \textit{create}$(n,\valv$), which creates a file or directory ($\valv$) at $n$
where the file system contains no file or directory ($\empt$);
\item \textit{edit}$(n,\valv)$, which replaces the earlier file or directory at $n$ with $\valv$;
\item \textit{remove}$(n)$, which removes the file or directory at $n$, and replaces it with $\empt$.
\end{itemize}
Regarding their output, that is, the state of the filesystem at $n$
after applying the command,
we know that after \textit{create} or \textit{edit,} $\FS(n)\neq\empt$, whereas after \textit{remove,}
$\FS(n)$ will be $\empt$. 
However, from \cite{NREC} and \cite{CBNR} we know that a useful set of axioms
will in some cases need to distinguish between, for example,
\textit{edit}s that result in directories (\textit{edit}$(n,\vald)$) and
ones that result in files (\textit{edit}$(n,\valf)$), and treat them as separate commands,
as their behaviors are quite different when combined with other commands.
Indeed, Bill Zissimopoulos' work
demonstrated \cite{BZ}
that extending this distinction to more commands ultimately simplifies
the definition of conflicting commands, as our model will then able to predict the behavior of commands
more precisely.

Notice, however, that the commands listed above also encode some information about 
their input, the state of the filesystem
before the command is applied. In particular, \textit{create}$(n,\valv)$ requires that there are no files
or directories at $n$, while \textit{edit}$(n,\valv)$ and \textit{remove}$(n)$ require the opposite.
This creates an arbitrary asymmetry where
there is now more information available about their output than about their input.
As, based on the above, we expect that encoding more information in the commands
results in a model with greater predictive powers,
and in order to resolve this asymmetry, 
we propose a set of commands that encode
the type of the input value $\FS(n)$ as well.
(Some real-life filesystem commands like \textit{rmdir} do this already.)
Because the success or failure of commands depends on types of values,
it is not necessary to encode the actual input value.

According to the above,
we model filesystem commands with partial functions 
that map filesystems to filesystems, but which
are only defined where the commands succeed.
As usual, if the function is not defined, we say that that it returns the bottom element, $\fsbroken$,
or that it \emph{breaks} the filesystem.
We describe a filesystem commands using its input, output, and
the node it is applied to.

\begin{mydef}[Filesystem commands]
Filesystem commands are partial endofunctions on filesystems,
which are not defined on filesystems where they return an error.
The commands are represented using triplets of the form
\[ \caaaa{x}{y}{n}, \]
where $x,y\in\setv$ and $n\in\setn$.
The equivalence class $[x]$ is the \emph{input type} of the function;
$y$ is its \emph{output value} implicitly specifying the \emph{output type} $[y]$,
and $n$ is the node the command is applied to.
\end{mydef}
For example, $\cbfa{n}$ represents \textit{create}$(n,\valfx)$,
and $\cdba{n}$ represents \textit{rmdir}$(n)$.
Where it is convenient, in addition to the triplets we will also use $\alpha$ and $\beta$
to denote unknown commands.

We write $\cbrk$ for the empty partial function which is not defined anywhere.
This does not naturally occur in sequences of commands we will investigate,
but is useful when reasoning about the combined effects of commands.

We note that a command or a sequence of commands is applied to a filesystem
by prefixing the command or sequence to it, for example: $\cbrk\aFS$, $\cbda{n}\aFS$, 
or $S\aFS$ if $S$ is a sequence of commands.

To define the effect of a command on a filesystem, we use
the replacement operator of the form
$\fsreplacement{\FS}{n}{\valv}$ to denote a filesystem derived from $\FS$ 
by replacing its value at $n$ with $\valv$:
\[ \fsreplacement{\FS}{n}{\valv}(m) =
   \begin{cases}
   \valv &\textrm{if~} m=n\\
   \FS(m) &\textrm{otherwise.}
   \end{cases}
\]

\begin{mydef}[Effect of commands]
The effect of the command $\cxynv$ is defined as follows:
\[
\cxynv\aFS = 
   \begin{cases}
   \fsbroken &\textrm{if $[\FS(n)]\neq[x]$,}\\
   \fsbroken &\textrm{if $\fsreplacement{\FS}{n}{y}$ violates the tree property,}\\
   \fsreplacement{\FS}{n}{y} &\textrm{otherwise.}
   \end{cases}
\]
\end{mydef}
In other words, the command $\cxynv$ breaks a filesystem if the input type $[x]$
does not match the type of the value in the filesystem at $n$, or if after replacing
the value at $n$ with $y$, the resulting filesystem ceases to satisfy the tree property
as described in Definition\nobreakspace \ref {def_treeprop}.

\myskip
There are nine groups of commands considering their input and output types.
Based on these groups we separate commands into four categories 
that reflect their overall effect:
construction commands extend the filesystem,
while destruction commands shrink it;
the replacement command replaces a file value with a file value
(which may or may not be different from the original value),
and assertion commands simply assert the type of the value at a node.
\begin{mydef}[Command categories]\mlabel{def:command_categories}
Depending on their input and output types, filesystem commands
belong to exactly one of the following four categories:
\begin{itemize}
\item[]\emph{Construction commands:}
    $\cbfa{n}$, $\cbda{n}$ and $\cfda{n}$;
\item[]\emph{Destruction commands:}
    $\cdfa{n}$, $\cdba{n}$ and $\cfba{n}$;
\item[]\emph{Assertion commands:}
    $\cbba{n}$ and $\cdda{n}$;
\item[]\emph{Replacement command:}
    $\caaa{\valfx}{\valf}{n}$.
\qedhere
\end{itemize}
\end{mydef}

\myskip
For reasons also listed in \cite{NREC}, in this model we will not consider
a \textit{move (rename)} command.
This turns out to be useful because this would be the only command that affects
filesystems at two nodes at once, therefore describing 
the dependencies for \textit{move} would call for a more complicated model.
As detailed in Section\nobreakspace \ref {sec_multidir}, even with this restriction,
an implementation of the reconciliation algorithm presented here
can still handle \emph{move} commands in its input and output
by splitting them into \emph{delete} and \emph{create} commands,
and merging them later.

\section{Command Pairs and Sequences}\mlabel{section_axioms}

So that we can describe the effects of commands independently of filesystems,
let us introduce some notation
and note some observations.
We already know that
commands usually do not occur in isolation,
and are applied to filesystems in time.
Therefore we investigate sequences of commands with a well-defined order.
\begin{mydef}[Sequences of commands and $\emptyseq$]
We concatenate commands in writing to note that they form a sequence,
and concatenate sequences of commands to form a longer sequence,
with the meaning that the resulting sequence is executed from left to right:
\[ (\alpha\cc\beta)\aFS = \beta(\alpha\aFS). \]
Sequences are also partial endofunctions on filesystems,
defined only if all commands they contain succeed in the given order.
Sequences form a monoid, and, as usual,
we write $\emptyseq$ to denote the unit element, the empty sequence,
which is defined on all filesystems and, by definition, leaves all filesystems unchanged.
\end{mydef}

\begin{mydef}[$\Dom{S}$]
For a sequence of commands $S$, $\Dom{S}$ is the domain of $S$, that is,
the set of filesystems $S$ does not break.
\end{mydef}

The following two relations 
echo the ones defined in \cite{NREC}.
In the definitions, $A,B,S$ and $T$ stand for arbitrary sequences.

\begin{mydef}[$\eqext$, $\strext$ and $\equiv$]
We write $A\eqext B$, or say that $B$ \emph{extends} $A$,
to mean that they behave in the same way
on any filesystem $A$ does not break:
$\forall \FS\in\Dom{A}:\,A\aFS=B\aFS$.
We can also see that $A\eqext B$ and $S\eqext T$ implies $A\cc S\eqext B\cc T$.
In other words, $\eqext$ is a preorder, and also a precongruence.

We write $A\equiv B$,
or say that $A$ and $B$ are \emph{equivalent,}
iff $A\eqext B$ and $B\eqext A$;
that is, $\equiv$ is the intersection of the preorder $\eqext$ with its inverse,
and so it is a congruence.

Finally, we write $A\strext B$ to mean $A\eqext B$ and $A\nequiv B$.
In particular, we write $\wrks{A}$
to mean that $\Dom{A}$ is not empty, that is, $A$ is defined on some filesystems.
\end{mydef}

It is easy to see that the equivalence $\equiv$ holds on the level of filesystems:
\begin{mylem}\mlabel{equiv_on_fs}
$A\equiv B$
iff $A$ and $B$ behave in the same way on
all filesystems, that is, $\forall \FS: A\aFS=B\aFS$.
\end{mylem}
\begin{proof}
As $A\equiv B$ means that both $A$ extends $B$ and $B$ extends $A$, it necessarily
follows that $A$ and $B$, as partial functions, are identical.
\end{proof}

\myskip
One aim of our algebraic model is to
derive as much information about the effects of sequences
of commands independently of the actual filesystems as possible.
In order to make this possible, we investigate the smallest building
blocks of sequences: pairs of commands that act on a filesystem directly one after the other.
This approach is useful as there are a limited number of command pairs,
because, as we argued above, we can disregard the exact output values of commands apart from their type,
and we can also abstract the relationship between the nodes in the two commands
to a finite number of cases.
These properties of command pairs are crucial as they determine
how a set of commands can be re-ordered to be applied to a filesystem
during synchronization, and what command pairs will never be compatible.

Command pairs in general have the form
\[ \cxynv\cc  \czwmv \]
where $x,y,z,w\in\setv$ and $n,m\in\setn$. 
Extending Definition\nobreakspace \ref {def:command_categories}, we call certain pairs
\emph{construction} or \emph{destruction pairs}.

\begin{mydef}[Construction and destruction pairs]
A pair of commands on nodes $\pn$ and $n$
is a \emph{construction pair} if the input and output types match
one of the following patterns:
   \begin{align*}
            \cbda{\pn}&\cc  \cbfa{n} \\
            \cbda{\pn}&\cc  \cbda{n} \\
            \cfda{\pn}&\cc  \caaa{\empt}{\valf}{n} \\
            \cfda{\pn}&\cc  \cbda{n}
   \end{align*}
The pair is a \emph{destruction pair} if the types match one of the following:
   \begin{align*}
            \cfba{n}&\cc  \cdba{\pn} \\
            \cfba{n}&\cc  \caaa{\vald}{\valf}{\pn} \\
            \cdba{n}&\cc  \cdba{\pn} \\
            \cdba{n}&\cc  \cdfa{\pn} \qedhere
   \end{align*}
\end{mydef}
We can see that construction pairs consist of construction commands,
while destruction pairs consist of destruction commands.

\MakeUppercase Lemma\nobreakspace \ref {rules_lemma} summarizes the basic properties of command pairs.
The claims listed are named \emph{Rules}
as they can also be interpreted as inference rules in a pure
algebraic treatment of filesystem synchronization.
Section\nobreakspace \ref {sec_algebra} elaborates this approach.

\begin{mylem}\mlabel{rules_lemma}
\newcounter{rulecounter}
\begin{list}{\bf Rule~\arabic{rulecounter}.}{\usecounter{rulecounter}}

\item[] %

\item\mlabel{ax_separate_commute}
\axaxseparatecommute

\item\mlabel{ax_separate_nobreaks}
\axaxseparatenobreaks

\item\mlabel{ax_same_breaks}
\axaxsamebreaks

\item\mlabel{ax_same_emptyseq}
\axaxsameemptyseq

\item\mlabel{ax_same_singlec}
\axaxsamesinglec

\item\mlabel{ax_directchild_breaks}
\axaxdirectchildbreaks

\item\mlabel{ax_directparent_breaks}
\axaxdirectparentbreaks

\item\mlabel{ax_distantrel_breaks}
\axaxdistantrelbreaks

\item\mlabel{ax_child_assert}
\axaxchildassert

\item\mlabel{ax_parent_assert}
\axaxparentassert

\item\mlabel{ax_assert}
\axaxassert

\end{list}
\end{mylem}

We have included Rules\nobreakspace \ref {ax_directchild_breaks} and\nobreakspace  \ref {ax_directparent_breaks}
for completeness, but we will use the more generic Lemma\nobreakspace \ref {simple_distant_pairs},
which extends these rules to non-adjacent commands, although only in the case
of command sequences without any assertion commands.
The Rules can be derived from the filesystem model
relatively easily. Proofs are included in Appendix\nobreakspace \ref {axiom_proof}.

\myskip
We also define the concept of two commands being independent.

\begin{mydef}[$A\indep B$: Independent commands, sequences and sets of commands]\mlabel{def_indep}
Two commands $\alpha$ and $\beta$ 
are independent, written as $\alpha\indep\beta$, if 
they commute and do not break all filesystems:
\[ \alpha\cc\beta \equiv \wrksx{\beta\cc\alpha}. \]
For two sequences or unordered sets of commands $A$ and $B$ we write $A\indep B$ if
for all $\alpha$ in $A$ and all $\beta$ in $B$, $\alpha\indep\beta$.
We also write $\alpha\indep B$ for $\{\alpha\}\indep B$.
\end{mydef}

It is intentional that we use the same symbol for independent commands
as for incomparable nodes. As the
Corollary below
shows, these concepts are closely related.

\begin{mycor}\mlabel{incomparable_is_independent}
If $\cxynv$ and $\czwmv$ are different commands,
and none of them is an assertion command, then
$\cxynv\indep\czwmv$ if and only if $n\unrel m$.
\end{mycor}

\MakeUppercase Corollary\nobreakspace \ref {incomparable_is_independent} follows from
the next Lemma,
which characterizes independent commands in a more detailed way.

\begin{mylem}\mlabel{independent_details}
For two commands $\cxynv\indep\czwmv$ if and only if
one of the following holds:
\begin{itemize}
\item $n\unrel m$,
\item $n=m$ and the commands are equivalent assertion or replacement commands,
\item either $n\descendant m$ or $m\descendant n$,
\emph{and} either the command on the ancestor asserts a directory,
or the command on the descendant asserts an empty node (or both).
\end{itemize}
\end{mylem}

\begin{proof}
By definition we know that $\cxynv\cc\czwmv\equiv\wrksx{\czwmv\cc\cxynv}$.
The first case is a restatement of
Rules\nobreakspace \ref {ax_separate_commute} and\nobreakspace  \ref {ax_separate_nobreaks}.
Otherwise $n\nunrel m$, in which case either $n=m$, or, without loss of generality, $n\descendant m$.

If $n=m$, then from Rule\nobreakspace \ref {ax_same_breaks} we must have $[y]=[z]$ and $[w]=[x]$
as otherwise the commands, in one order or the other, would break all filesystems.
The equivalence also implies $[x]=[z]$ as otherwise the two sides would not be defined on the same filesystem,
as well as $y=w$ as their effects is the same.
The two commands are therefore the same, and they are either assertion or replacement commands.

The remaining case is that $n\descendant m$.
If the nodes are not directly related, then Rule\nobreakspace \ref {ax_distantrel_breaks} gives
the conditions in the third case in the lemma.
If they are directly related, 
Rules\nobreakspace \ref {ax_directchild_breaks} and\nobreakspace  \ref {ax_directparent_breaks}
gives the same conditions, while they also allow the commands being
a construction or destruction pair.
However, the reversals of construction and destruction pairs break all filesystems.
\end{proof}

While equivalent commands can hardly be considered independent,
we retain this name for this relation as in most
cases we use it on pairs of different commands that are also not assertion commands.

\section{Update Detection}\mlabel{sec_update}

In a command-based reconciliation solution we assume that we have two sequences of commands
$A$ and $B$ that have modified a single filesystem $\FS$ 
yielding two different replicas $\FS_1$ and $\FS_2$ which we
need to reconcile. While it is conceivable that the sequences are based on a record of
all operations that modified the two filesystems, in most filesystem implementations
such records do not exist, and therefore we must construct suitable sequences
by comparing $\FS_1$ (and $\FS_2$) to their common ancestor, $\FS$. 
This is called \emph{update detection.}

The set of commands $U$ necessary to transform $\FS$ into $\FS_1$ can be collected
by inspecting the (finite) union of non-empty nodes in the two filesystems.
If at node $n$, $\FS(n)=x$ and $\FS_1(n)=y\neq x$, then we add the command $\cxyaa{n}$ to $U$.
We can always do so as there is a command available for all combinations of input and output types and values.
We see that any suitable $U$ necessarily contains commands on all nodes at which the values have changed.

\begin{mydef}[Minimal and simple command sets and sequences]\mlabel{def_min_simp}
A sequence or set of commands is \emph{minimal} if it contains at most one command on each node.
It is \emph{simple} if it is minimal and it does not contain assertion commands.
\end{mydef}

This update detector therefore yields a simple set of commands because we only add a single command
for each node, and we only add commands that are necessary, that is, there will be no 
assertion commands in the set.

The next step in generating the sequences is to order the commands collected.
As this task is at the heart of reconciliation itself independently of update detection,
we discuss it in the next section.
Then, in Theorem\nobreakspace \ref {update_works}, we prove that the resulting sequence 
returned by the update detector actually works without breaking the filesystem.

\subsection{Ordering commands}\mlabel{ordering}

We often encounter the case where we only have a set of commands without a specified order.
As we have seen above, this can occur after the first stage of update detection,
but, more importantly,
it is actually the task of the reconciler to determine whether there is an order,
and if yes, what order,
in which updates from different replicas can be applied to a filesystem.

As multiple orders may be possible,
we describe our ordering algorithm by
defining a subset of the permutations of the commands, $\ordersetsign$.
\MakeUppercase Lemma\nobreakspace \ref {simple_reorder_equiv} proves that $\ordersetsign$ indeed contains all possible
sequences of commands that do not break all filesystems.
The facts that all valid reorderings of simple sequences are equivalent,
and all equivalent sequences are reorderings follow from this Lemma, and are important 
properties of simple sequences.

We define $\ordersetsign$ with the help of a partial order.
\begin{mydef}[$\orderrel$]\mlabel{def_orderrel}
The binary relation $\orderrel$ holds for some command pairs
on directly related nodes in the following cases:
\begin{itemize}
\item $\caaa{x}{y}{\parentf{n}} \orderrel \caaa{z}{w}{n}$ if both commands are construction commands, and
\item $\caaa{x}{y}{n} \orderrel \caaa{z}{w}{\parentf{n}}$ if both commands are destruction commands. \qedhere
\end{itemize}
\end{mydef}
This relation is clearly irreflexive and anti-symmetric,
and as a command cannot be both a construction and a destruction command and
the relation follows the tree of nodes, it is acyclic.
Therefore its transitive closure, also denoted by $\orderrel$, is a partial order.

\begin{mydef}[$\ordersetsign$]\mlabel{def_orderset}
For a simple sequence of commands $\wrksx{S}$,
$\orderset{S}$ is the set of permutations of the commands in $S$ respecting the partial order defined by $\orderrel$,
that is,
for any $T\in\orderset{S}$ and any $\alpha,\beta\in T$, if $\alpha\orderrel\beta$ then $\alpha$
precedes $\beta$ in $T$.
\end{mydef}

Returning a valid $T\in\orderset{S}$ order for a set of commands $S$
can be implemented using any well-known algorithm for topological sorting,
where the directed acyclic graph that is the input of the algorithm is defined by $\orderrel$.

The principal ideas behind the definition of $\orderrel$ are 
Rule\nobreakspace \ref {ax_separate_commute}, from which we know that commands on incomparable
nodes can be applied in any order, so we only need to focus on the order of commands
with comparable nodes;
Lemma\nobreakspace \ref {connected_changes}, from which we know that if a simple sequence contains commands on two comparable
nodes, then it contains commands on all nodes in between, so it is enough to specify
the order of commands on directly related nodes;
and finally, Lemma\nobreakspace \ref {simple_distant_pairs}, from which we know
that commands on parent--child pairs must be construction or destruction pairs.

To prove Lemma\nobreakspace \ref {simple_reorder_equiv}, we need the following simple results.

\begin{mycor}\mlabel{subseq_in_orderset}
For simple sequences $S$ and $T$,
if $T\in\orderset{S}$, then for any subsequence of $S$, $S_0$,
and for the corresponding subsequence of $T$, $T_0$, which
contains the same commands but in the order they are found in $T$,
$T_0\in\orderset{S_0}$ must hold.
\end{mycor}

This follows directly from Definition\nobreakspace \ref {def_orderset}.

\begin{mylem}\mlabel{simple_distant_pairs}
For any simple sequence $\wrksx{S}$ and its subsequence $\cxynv\cc\allowbreak\czwmv$,
if $n=\parent(m)$, then the subsequence is a construction pair;
and if $\parent(n)=m$, then the subsequence is a destruction pair.
\end{mylem}

Informally speaking, the proposition means that
the two commands must form a construction or destruction pair even
if they are not next to each other.
This is true because in simple sequences
there cannot be another command on $n$ or $m$,
so if the commands are incompatible, no command between them can change that.
In fact, this Lemma is the general case
of Rules\nobreakspace \ref {ax_directchild_breaks} and\nobreakspace  \ref {ax_directparent_breaks} when there are no assertion commands.

\begin{proof}
Formally, we prove our proposition by contradiction,
and in this version of the proof we reach back to our filesystem model.
By assumption $\wrks{S}$ and therefore there is a $\FS$ for which $S\aFS\neq\fsbroken$.
Select two commands on a node and its parent, $\cxynv$ and $\czwmv$,
and split $S$ around them into three parts:
\[ S = S_0 \cc \cxynv \cc S_1 \cc \czwmv \cc S_2, \]
where any of the three parts can be empty.
As $S$ is simple, there are no commands on $n$ or $m$ in $S_0$, $S_1$, or $S_2$,
and therefore $(S_0\cc\cxynv\cc S_1)\aFS(n) = y$,
and $[(S_0\cc\cxynv\cc S_1)\aFS(m)] = [S_0\aFS(m)]$,
and so $[z]$ must be $[S_0\aFS(m)]$.

If $n=\parent(m)$, then 
we know $(S_0\cc\cxynv\cc S_1)\aFS(n) = y$, and as $[z]\neq[w]$, either
$[(S_0\cc\cxynv\cc S_1)\allowbreak\aFS(m)]=[z]\neq[\empt]$,
or $[(S_0\cc\cxynv\cc S_1\cc\czwmv)\allowbreak\aFS(m)]=[w]\neq[\empt]$.
Therefore $y$ must be $\vald$, as otherwise the tree property would be violated
when applying $\czwmv$,
and $[x]\neq[\vald]$ as $\cxynv$ is not an assertion command.
As $[S_0\aFS(n)]=[x]\neq[\vald]$, we know
$S_0\aFS(m)=\empt$ as otherwise the tree property would be violated.
Therefore $[S_0\aFS(m)]=[z]=[\empt]$, which, combined with $y=\vald$
means that $\cxynv\cc\czwmv$ is a construction pair.

If $\parent(n)=m$, then 
as $(S_0\cc\cxynv\cc S_1)\aFS(n) = y$, and either
$[(S_0\cc\cxynv\cc S_1)\allowbreak\aFS(m)]=[z]\neq[\vald]$
or $[(S_0\cc\cxynv\cc S_1\cc\czwmv)\allowbreak\aFS(m)]=[w]\neq[\vald]$,
$y$ must be $\empt$, and $[x]\neq[\empt]$.
As $[S_0\aFS(n)]=[x]\neq[\empt]$, we know $S_0\aFS(m)=\vald$.
Therefore $[S_0\aFS(m)]=[z]=[\vald]$, which, combined with $y=\empt$
means that $\cxynv\cc\czwmv$ is a destruction pair.
\end{proof}

\begin{mycor}\mlabel{order_is_only_possible}
For any simple sequence $\wrksx{S}$ and its permutation $S'$, 
\[ S'\not\in\orderset{S} \Longrightarrow S'\equiv\cbrk. \]
\end{mycor}
\begin{proof}
This follows directly from Lemma\nobreakspace \ref {simple_distant_pairs}.
If $S'\not\in\orderset{S}$ then by Definitions\nobreakspace \ref {def_orderrel} and\nobreakspace  \ref {def_orderset}
$S'$ contains two commands on a node $n$ and its parent
that, as a subsequence,
form a construction or destruction pair in the wrong order.
\MakeUppercase Lemma\nobreakspace \ref {simple_distant_pairs} implies that no such subsequence exists
if $\wrksx{S'}$, and therefore $S'\equiv\cbrk$.
\end{proof}

\begin{mylem}\mlabel{connected_changes}
Given a set of commands that
does not contain assertion commands,
and which can be applied to a filesystem in some order without breaking it,
and which contains commands on $\nn$ and $n$ where $\nn\descendant n$,
then the set must also contain a command
on each node between $\nn$ and $n$.
\end{mylem}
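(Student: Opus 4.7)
The plan is to argue by contradiction: suppose there is some intermediate node $m$ with $\nn\descendant m\descendant n$ on which the set contains no command. Since commands only modify the node they act on, $\FS(m)$ remains constant throughout the execution of any valid ordering of the set.

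First I will pin down this constant value as $\vald$ by looking at the command on $n$, call it $c_n$. Because the set contains no assertion commands, either the input type or the output type of $c_n$ lies in $\{\cchard,\ccharf\}$. At the corresponding moment (just before or just after $c_n$ is applied), $\FS(n)\neq\empt$. Iterating the tree property along the ancestor chain from $n$ upward to $\nn$---each step uses that if a non-empty node has a parent, that parent is in $\setd$, and $\vald$ is itself non-empty so the step can be repeated---every strict ancestor of $n$ down to and including $\nn$ must hold the value $\vald$ at that moment. In particular $\FS(m)=\vald$ at that moment, and since $\FS(m)$ never changes, $\FS(m)=\vald$ throughout the execution.

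Next I will derive the contradiction using the command on $\nn$, call it $c_{\nn}$. Since $c_{\nn}$ is not an assertion command, it is not $\cdd$, so either its input type or its output type is different from $\cchard$. At the matching moment (just before or just after $c_{\nn}$), $\FS(\nn)$ is therefore either $\empt$ or a file value in $\setf$, hence not in $\setd$. On the other hand $m$ is a strict descendant of $\nn$ with $\FS(m)=\vald\neq\empt$, so iterating the tree property upward from $m$ along the chain to $\nn$ forces $\FS(\nn)\in\setd$. This is a contradiction, so the assumed node $m$ cannot exist, proving the lemma.

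The only real obstacle is making the iterated tree-property argument rigorous, which is handled by a short induction on the length of the ancestor chain from $m$ (respectively from $n$) up to $\nn$; every node involved has a parent in $\setn$ because it is a strict descendant of $\nn$, so the induction never leaves the tree. Everything else is immediate from the definitions of assertion command, of a non-breaking application, and of the tree property.
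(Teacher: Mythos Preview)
Your proof is correct, but it takes a genuinely different route from the paper's.

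The paper argues algebraically within its Rules framework. It reduces to showing there is a command on $\parent(n)$ (and then iterates up the chain). Assuming there is none, it bubble-sorts the sequence: the command adjacent to $\cxynv$ in the direction of $\czwnnv$ is either on an incomparable node (swap via Rule~\ref{ax_separate_commute}) or on a comparable node other than $\parent(n)$, in which case Rule~\ref{ax_distantrel_breaks} already gives $A\equiv\cbrk$. Repeating brings $\cxynv$ and $\czwnnv$ adjacent, and Rule~\ref{ax_distantrel_breaks} again yields the contradiction.

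You instead argue semantically, tracking the actual filesystem state during a fixed non-breaking execution. You exploit that the value at the untouched intermediate node $m$ is invariant, pin it to $\vald$ using the tree property at a moment witnessed by $c_n$ (since $c_n\neq\cbb$ forces $\FS(n)\neq\empt$ before or after), and then obtain the contradiction at a moment witnessed by $c_{\nn}$ (since $c_{\nn}\neq\cdd$ forces $\FS(\nn)\notin\setd$ before or after, while $\FS(m)=\vald$ and the tree property force $\FS(\nn)\in\setd$).

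What each approach buys: the paper's proof stays entirely inside the command-pair Rules of Section~\ref{section_axioms}, which matters for the programme sketched in Section~\ref{sec_algebra} of deriving everything from those Rules as axioms; but it pays for this with the swapping argument and an outer induction along the chain. Your argument is shorter and more elementary, needing only the definitions of assertion command, non-breaking application, and the tree property, and it handles an arbitrary missing intermediate node in one shot; the trade-off is that it reasons directly about filesystem states rather than via the Rules.
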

\begin{proof}
Without loss of generality, we can assume that $\nn\neq\parent(n)$.
We prove that under the given conditions, the set must contain a command on $\parent(n)$.
Then, by reapplying this result, we know that the set must contain commands on every
ancestor of $n$ up to $\nn$.

Furthermore,
we prove this proposition for sequences, not sets, as if all sequences must contain a command on $\parent(n)$,
then so must all sets because otherwise there would be no order in which the commands they contain could be
applied to a filesystem.

Let $A$ be a sequence that satisfies the conditions;
we therefore know that it contains a command $\cxynv$ on $n$
and another command $\czwnnv$ on $\nn$.
By contradiction assume that there are no commands on $\parent(n)$ in $A$.
Next, we create a new sequence $A'\equiv A$ in which $\cxynv$ and $\czwnnv$ are next to each other.
If they are already next to each other in $A$, there is nothing to do.
Otherwise, consider the command next to $\cxynv$ in the direction where $\czwnnv$ is.
Let this command be $\cqrmv$.
If $\czwnnv$ is to the right, then $A$ looks like the following:
\[ A = \cdots\cc\cxynv\cc\cqrmv\cc\cdots\cc\czwnnv\cc\cdots \]
If $n\unrel m$, then swap $\cxynv$ and $\cqrmv$. Based on Rule\nobreakspace \ref {ax_separate_commute} we know that the new
sequence is equivalent to $A$.
Otherwise, we know $m\neq\parent(n)$ as there are no commands on $\parent(n)$, and so
from Rule\nobreakspace \ref {ax_distantrel_breaks} we get $A\equiv\cbrk$ which contradicts our assumptions.
(Note that $A$ does not contain assertion commands.)
By repeating this step we can therefore convert $A$ into $A'$ where $\cxynv$ and $\czwnnv$ are neighboring commands.
However, then Rule\nobreakspace \ref {ax_distantrel_breaks} applies to the sequence and therefore $A\equiv A'\equiv\cbrk$ which
is again a contradiction.
\end{proof}

\begin{mylem}\mlabel{equiv_simple_same_commands}
If two simple sequences $A$ and $B$ are equivalent
and do not break all filesystems ($A\equiv \wrksx{B}$),
then they must contain the same commands.
\end{mylem}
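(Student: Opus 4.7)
The plan is to pick a single filesystem on which both sequences succeed, and then read off the command on each node by inspecting the input and output values at that node. First, I would note that $A \equiv B$ together with $A \nequiv \cbrk$ implies $A$ and $B$ break exactly the same filesystems, since $A\aFS = \fsbroken$ iff $B\aFS = \fsbroken$. So I fix a filesystem $\FS_0$ on which both sequences succeed, and, using minimality, write the command in $A$ on node $n$ (if any) as $\cxyaa{n}{\valvy}$ and the command in $B$ on node $n$ (if any) as $\caaaa{X'}{Y'}{n}{\valv'}$.

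The first case is easy: if both sequences touch $n$, then $\FS_0(n)\in \setvx{X}\cap \setvx{X'}$, and since $\setb$, $\setf$, $\setd$ partition $\setv$, this forces $X=X'$. Comparing outputs gives $\valvy = A\aFS_0(n) = B\aFS_0(n) = \valv'$, which in turn forces $Y=Y'$, so the two commands agree. What remains is to rule out the asymmetric case where, say, $A$ has a command on $n$ but $B$ does not. In that case $B\aFS_0(n)=\FS_0(n)=\valvy$, forcing $X=Y$, and since $A$ is simple and so contains no assertion commands, this forces $X=Y=\ccharf$; the command in $A$ on $n$ is then of the form $\caaaa{\ccharf}{\ccharf}{n}{\valf}$ with $\FS_0(n)=\valf$.

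The hard part will be extracting a contradiction from this file-to-file case, since $\FS_0$ alone no longer witnesses any difference between $A$ and $B$. To finish, I would perturb $\FS_0$ at $n$: let $\FS_1 = \FS_0[\valf'/n]$ for some $\valf'\in \setf$ with $\valf'\neq \valf$. The key observation is that every success condition for the commands in $A$---input-type matching at each command's own node and the tree property at every intermediate state---depends only on the three types $\{\ccharb,\ccharf,\cchard\}$ of values, never on which specific file content lives at $n$. Since the only command in $A$ touching $n$ has both input and output type $\ccharf$, the type at $n$ remains $\ccharf$ throughout the execution under $\FS_1$ just as under $\FS_0$, so $A$ still succeeds on $\FS_1$, and hence so does $B$. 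But then $A\aFS_1(n)=\valf \neq \valf' = B\aFS_1(n)$, contradicting $A\equiv B$. This rules out the asymmetric case and so, node by node, $A$ and $B$ contain exactly the same commands.
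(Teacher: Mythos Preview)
Your argument is correct and tracks the paper's: fix a filesystem $\FS_0$ on which both sequences succeed, and at each node compare the forced input types (both must match the type of $\FS_0(n)$) and output values (both must equal $A\aFS_0(n)=B\aFS_0(n)$). The one place you diverge is in the asymmetric case. The paper disposes of it by asserting that a non-assertion command has $X\neq Y$, whence $\FS_0(n)\neq\valvy$ immediately; but this step does not cover $\cff$, which is not an assertion command yet has equal input and output types. You isolate exactly this residual case and close it with a perturbation: replace $\FS_0(n)$ by a different file value $\valf'$, observe that success of every command in $A$ (and in $B$) depends only on the types at each node, so both still succeed on $\FS_1=\FS_0[\valf'/n]$, and then read off the mismatch $A\aFS_1(n)=\valf\neq\valf'=B\aFS_1(n)$. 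This uses $|\setf|\ge 2$, which is implicit in the paper's model. So your proof is not so much a different route as a more careful execution of the same one.
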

\begin{proof}
Let $A$ and $B$ be simple sequences
such that $A\equiv \wrksx{B}$,
and $\FS$ be a filesystem that they are defined on.
We use a proof by contradiction, and assume that they do not contain the same commands.
Without loss of generality, we can assume
that $A$
contains $\cxynv$, and $B$ either contains a different command
$\czwnv$ on $n$, or no command on $n$ at all.
As $A$ is simple, we know that $\cxynv$ is not an assertion command,
and therefore either $[x]\neq [y]$, or it is a replacement command.

If $[x]\neq [y]$, then $\FS(n)\neq y$ as $[\FS(n)]=[x]$.
Therefore, if $B$ has no command on $n$, then $B\aFS(n)=\FS(n)\neq y=A\aFS(n)$ and
$A$ and $B$ cannot be equivalent.
If $B$ includes $\czwnv$, then we know that $[z]=[x]$ as $B$ does not break $\FS$ either,
and that $w=y$ as $B\aFS(n)$ must
be $y$ and $B$ only has one command on $n$.
This means that $\czwnv=\cxynv$, which is a contradiction.

If $\cxynv$ is a replacement command, then we know $[x]=[y]=[\valfx]$, and
$[\FS(n)]=[\valfx]$. If $\FS(n)=y$, then instead of this filesystem,
consider $\fsreplacement{\FS}{n}{\valf}$ where $\valf$ is any file value
other than $y$.
From here the proof concludes in the same way as above.
\end{proof}

\begin{mylem}\mlabel{simple_reorder_equiv}
For a simple sequence $\wrksx{S}$,
$\orderset{S}$ is the set of all simple sequences equivalent to $S$.
\end{mylem}

\begin{proof}
First, we prove, by contradiction, that if $T$ is a simple sequence and $T\equiv S$, then $T\in\orderset{S}$.
Assume $T\equiv S$ but $T\not\in\orderset{S}$.
Then, from Lemma\nobreakspace \ref {equiv_simple_same_commands} we know $T$ is a permutation of $S$,
and from Corollary\nobreakspace \ref {order_is_only_possible} we know $T\equiv\cbrk$ which
is a contradiction as $T\equiv \wrksx{S}$.

Next, we prove that if $T\in\orderset{S}$, then $T\equiv S$.
We proceed by induction on the length of $S$.
Our base cases are $\emptyseq$, when $\orderset{\emptyseq}=\{\emptyseq\}$,
and one-long sequences, when this is trivially true.
In our induction step we assume that $T^*\in\orderset{S^*}\Rightarrow T^*\equiv S^*$ holds
for all sequences of length $i$ or less, where $i\geq 1$.

Let us consider $T\in\orderset{S}$ where $T$ and $S$ are of length $i+1$.
Let the first command in $S$ be $\cxynv$, that is, $S=\cxynv\cc S_0$.
We proceed by a nested induction on $j$, the number of commands before $\cxynv$ in $T$.
Our base case is $j=0$, when it is the first command, which means $T=\cxynv\cc T_0$.
Then from Corollary\nobreakspace \ref {subseq_in_orderset} we know $T_0\in\orderset{S_0}$
and from the induction hypothesis $T_0\equiv S_0$ from which we get $T\equiv S$.

In our nested induction step we assume $T'\equiv S$ for all $T'\in\orderset{S}$
where there are at most $j$ commands in $T'$ to the left of $\cxynv$.
Let us then consider $T$ where there are $j+1$ such commands.
We aim to transform $T$ into $T''\equiv T$ by swapping $\cxynv$ with the preceding
command. Then from the induction hypothesis we know $T\equiv T''\equiv S$,
which proves our lemma.

Let the command to the left of $\cxynv$ in $T$ be $\czwmv$.
As $S$ is simple, we know $n\neq m$.
If $n\unrel m$, then from Rule\nobreakspace \ref {ax_separate_commute} we know we can swap the two commands
and get an equivalent sequence.
We finish the proof by showing that $n\unrel m$ must hold, as
all other cases lead to contradiction.

\newcommand{\indx}{\varphi}
If $n\nunrel m$, then either $n\descendant m$, or $m\descendant n$, and so from
Lemma\nobreakspace \ref {connected_changes} we know that $S$ contains a command on all nodes
between $n$ and $m$, and therefore so does $T$.
Let these nodes be $n_0, n_1, \ldots, n_k$ where $n_0=n$ and $n_k=m$,
and where 
either $n_\indx=\parentf{n_{\indx+1}}$ for all $\indx$ between $1$ and $k-1$,
or $n_{\indx+1}=\parentf{n_\indx}$ for all $\indx$.
We assume that $n_\indx=\parentf{n_{\indx+1}}$ is true.

Also, let $I^S_\indx$ be the index of the command on $n_\indx$ in $S$.
By Lemma\nobreakspace \ref {simple_distant_pairs}, for all $\indx$, both the commands on $n_\indx$ and on $n_{\indx+1}$ must be either
construction commands, or destruction commands, depending on
whether $I^S_\indx<I^S_{\indx+1}$ or $I^S_{\indx+1}<I^S_\indx$.

As a command cannot be both a construction and a destruction command,
this means that the indices $I^S$ are either monotone increasing, or monotone
decreasing, but cannot change direction.
And as $S$ begins with $\cxynv$ and so $I^S_0=1$, we know they are monotone increasing,
and the commands on $n_\indx$ are construction commands.

Let us turn to the location of these commands in $T$,
and denote their indices in $T$ with $I^T_\indx$.
We know $I^T_k=I^T_0-1$ as $\czwmv$ precedes $\cxynv$.
This means there must exist a $1\leq \indx<k$ for which $I^T_{\indx+1}<I^T_\indx$.
We therefore know that
$\caaa{x_{\indx+1}}{y_{\indx+1}}{n_{\indx+1}} \cc \caaa{x_\indx}{y_\indx}{n_\indx}$
is a subsequence of $T$.
However, because we know these are construction commands, and by assumption $\parentf{n_{\indx+1}}=n_\indx$,
the inverse of $\orderrel$ holds for this subsequence, which is a contradiction.

It can be shown in the same way
that the case when the $n_\indx$ nodes ascend the tree,
that is, $n_{\indx+1}=\parentf{n_\indx}$ is true, also leads to a contradiction.
In this case we find that all commands are destruction commands,
and a subsequence in $T$ again violates $\orderrel$.
\end{proof}

It follows from Lemma\nobreakspace \ref {simple_reorder_equiv} that given a simple set of commands $S$,
which can be applied to a filesystem in some order without breaking it,
this order will be in $\orderset{S}$,
and all sequences in $\orderset{S}$
represent the same partial endofunction on filesystems.
This allows us to treat such a set as a sequence for any purpose
where the internal order of the commands is irrelevant.
\begin{mydef}[$\ordp{S}$]
For a simple set or sequence of commands $\wrksx{S}$,
$\ordp{S}$ is the unique partial endofunction on filesystems
defined by any sequence in $\orderset{S}$.
\end{mydef}

Building on the results above, we finish by proving an important result,
which states that command sequences can be made simple using syntactical operations only.

\begin{myth}\mlabel{can_simplify}
Each sequence of commands $\wrksx{S}$
can be transformed into a simple sequence
$S^*\eqnrw S$ by applying repeatedly the following transformations:
{\rm(i)} swap neighboring commands on incomparable nodes;
{\rm(ii)} simplify neighboring commands on the same node;
{\rm(iii)} delete assertion commands.
\end{myth}
\begin{proof}
We can remove all assertion commands, as based on
Rule\nobreakspace \ref {ax_assert} the resulting sequence extends the original one.
We therefore assume that $S$ contains no assertion commands.

\newcommand{\ucx}{\alpha}
\newcommand{\ucy}{\beta}
\newcommand{\ucz}{\gamma}

We aim to find a simple sequence that extends $S$.
If $S$ is already simple, there is nothing to do.
If it is not simple, let $\ucx$ be the first command
on a node $n$ that is already present in $S$,
and let $\ucy$ be the previous command on $n$.
Splitting $S$ around these commands we get
\[ S = S_0 \cc \ucy \cc S_1 \cc \ucx \cc S_2. \]

We now show that $S_1\indep\ucx$,
which is equivalent to $S_1\indep\ucy$ as both commands are on $n$.
We use an inverse proof and assume that there is a command on $m$ in $S_1$
where $m\descendant n$ or $m\ancestor n$.
We know both $\ucy\cc S_1$ and $S_1\cc\ucx$ are simple,
and both are defined on a filesystem
(on $S_0\aFS$ and $(S_0\cc\beta)\FS$, respectively).
Therefore from Lemma\nobreakspace \ref {connected_changes} we know that
$S_1$ must also contain a command on $m'$ where
$m'=\parentf{n}$ or $n=\parentf{m'}$.
Let this command be $\ucz$.

If $n=\parentf{m'}$,
from Lemma\nobreakspace \ref {simple_distant_pairs} 
and the relationship of nodes in construction and destruction pairs
we know that the subsequence
$\ucy\cc\ucz$ must be a construction pair,
and both $\ucy$ and $\ucz$ must be construction commands.
But we also know that $\ucz\cc\ucx$ must be a destruction pair,
and both $\ucz$ and $\ucx$ must be destruction commands.
This is a contradiction as a command cannot be both.
We arrive at the same contradiction when $m'=\parentf{n}$.

We therefore know that $S_1\indep\ucx$ and so
from Rule\nobreakspace \ref {ax_separate_commute}
\[ S \equiv S_0\cc\ucy\cc\ucx\cc S_1 \cc S_2. \]
We can apply Rules\nobreakspace \ref {ax_same_emptyseq} and\nobreakspace  \ref {ax_same_singlec} to
$\ucy\cc\ucx$ and get a $S^*\eqext S$ that contains one or two less commands on $n$.
Repeating the swap and the simplification we
can convert $S$ into a simple sequence.
\end{proof}

\subsection{The Correctness of Update Detection}

With the relation $\orderrel$ we can now define our update detection algorithm.
Its inputs are the original filesystem $\FS$ which has been modified to yield $\FS^*$.
\begin{mydef}[Update detection]\mlabel{def_upddetect}~
\begin{enumerate}
    \item For each node $n$ where the value in $\FS$ and $\FS^*$ differ, add the command $\caaa{\FS(n)}{\FS^*(n)}{n}$ to
        a set of commands $U^*$. The result is a simple set of commands.
    \item Order the commands according to $\orderrel$, that is, 
        return any sequence $U$ from $\orderset{U^*}$. \qedhere
\end{enumerate}
\end{mydef}

\MakeUppercase Theorem\nobreakspace \ref {update_works} below proves that $U$ functions as expected, that is, $U\aFS=\FS^*$.
While this is trivial if we know that $U$ does not break $\FS$, we
still need to show that it is defined on $\FS$.

On systems where filesystem updates are recorded,
an alternative update detection algorithm is to
omit step 1, and use Theorem\nobreakspace \ref {can_simplify} to simplify the series of recorded updates
into a simple sequence containing the necessary updates.

\begin{myth}\mlabel{update_works}
For a simple sequence of commands $U$ returned by the update detector
when comparing the non-broken $\FS^*$ to the original $\FS$,
$U\aFS = \FS^*$.
\end{myth}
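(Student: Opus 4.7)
The plan is to exhibit a single explicit ordering $S$ of the set $U$ of commands collected by the update detector, prove directly that $S\aFS=\FS^*$ and hence $S$ does not break $\FS$, and then invoke Lemma\nobreakspace\ref{simple_reorder_equiv} to conclude that every ordering of $U$ produced by the algorithm of Section\nobreakspace\ref{ordering}---in particular the one actually returned by the update detector---is equivalent to $S$ and therefore also does not break $\FS$. The set $U$ is simple by construction (one command per node where $\FS$ and $\FS^*$ disagree, and no assertion commands), so once a non-breaking ordering $S\nequiv\cbrk$ is at hand, the lemma applies, identifying $U$ and $S$ as the same underlying set of commands.

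To build $S$ I would process $U$ in three successive phases. First, apply every command whose output type is $\ccharb$ (a removal) in order of strictly decreasing depth of the node on which it acts. Second, apply every command whose input and output types are both non-$\ccharb$, that is, of form $\cff$, $\cfd$, or $\cdf$, in any order. Third, apply every command whose input type is $\ccharb$ (a creation) in order of strictly increasing depth.

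The verification that each phase succeeds rests on the tree property of $\FS$ and $\FS^*$. A removal at $n$ in Phase 1 needs every child of $n$ to be empty when it fires; since $\FS^*(n)=\empt$ forces $\FS^*(m)=\empt$ for every child $m$ of $n$, any child with $\FS(m)\neq\empt$ also carries a removal in $U$ and, being deeper, has already been applied. A Phase 2 command finds its target in the state recorded by $\FS$ because no earlier phase has touched it, and for $\cdf$ the children are empty by the same reasoning as above. A creation at $n$ in Phase 3 needs $\parent(n)$ to be a directory at the time it fires; since $\FS^*(n)\neq\empt$ forces $\FS^*(\parent(n))\in\setd$, the parent either was already a directory in $\FS$, or is turned into one by a $\cfd$ in Phase 2, or is itself the target of a shallower Phase 3 creation that precedes the one at $n$. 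At the end, every node at which $\FS$ and $\FS^*$ disagree has been updated exactly once to its $\FS^*$ value, so $S\aFS=\FS^*\neq\fsbroken$.

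The main obstacle is the finite case analysis required to confirm that the three phases interact cleanly. For instance, the tree property forbids both a $\cfd$ at a parent and a $\cdf$ or $\cfd$ at one of its children (a file-parent in $\FS$ forces an empty child in $\FS$, and symmetrically in $\FS^*$), so Phase 2 is internally consistent in any order; similar checks are needed to rule out cross-phase clashes, such as a Phase 3 creation whose parent is slated for a Phase 1 removal. Once these configurations are catalogued the phase-by-phase verification becomes routine, and Lemma\nobreakspace\ref{simple_reorder_equiv} then transfers the result from the hand-crafted $S$ to whatever ordering the algorithm of Section\nobreakspace\ref{ordering} actually returns.
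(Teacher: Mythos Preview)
Your approach is correct and genuinely different from the paper's. The paper argues indirectly: it posits a hypothetical recorded history $S$ with $S\aFS=\FS^*$, then uses Lemma~\ref{connected_changes} and Rules~\ref{ax_same_emptyseq}--\ref{ax_same_singlec} to simplify $S$ into a simple sequence $S''$ that must coincide with the detector's output $U$, and finally invokes Lemma~\ref{simple_reorder_equiv}. You instead construct an explicit ordering of $U$ (removals bottom-up, type changes, creations top-down) and verify directly, from the tree property of $\FS$ and $\FS^*$, that it transforms $\FS$ into $\FS^*$; Lemma~\ref{simple_reorder_equiv} then does the same closing work. Your route is more elementary: it avoids Lemma~\ref{connected_changes} and the simplification rules, and it does not rely on the assumption that $\FS^*$ arose from $\FS$ via some recorded sequence---it works for any two non-broken filesystems. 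The paper's route, on the other hand, yields as a by-product the observation (which it notes explicitly) that any recorded update history can be reduced to a simple sequence using the Rules of Section~\ref{section_axioms} alone, something your argument does not provide. Your sketch of why Phase~2 commands are pairwise on incomparable nodes is the key structural fact and is correct: if two Phase~2 commands sat on comparable nodes, the ancestor would have to be a directory in both $\FS$ and $\FS^*$, making its command an assertion. The cross-phase exclusions you list are likewise all forced by the tree property in one of the two filesystems, so the remaining verification is indeed routine.
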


\begin{proof}
First, we create a sequence of commands $S$ for which $S\aFS=\FS^*$.
We create a topological sort of all nodes in $\setn$ that are
not empty in $\FS$ so that children come first,
that is, if $n'\descendant n$, then $n$ precedes $n'$.
For each node in the given order, we add a destruction command to a sequence $S_0$
that deletes the value at $n$.
We know that the commands will never break the filesystem, as the types of nodes
with empty children can be freely changed, and that $S_0\aFS$ is empty at all nodes.
Next, we order nodes in $\setn$ that are not empty in $\FS^*$ in the reverse order,
that is, if $n'\descendant n$, then $n'$ precedes $n$.
We add a construction command to a sequence $S_1$ for each node that changes an empty value
to $\FS^*(n)$.
We know that $S_1$ is defined on an empty filesystem for the same reason,
and clearly $(S_0\cc S_1)\aFS=\FS^*$.
By Theorem\nobreakspace \ref {can_simplify} there exists a simple $S$ for which $S\eqnrw (S_0\cc S_1)$.

Therefore we know $S\aFS=\FS^*$, and that
$\FS^*$ and $\FS$ differ at exactly those nodes that $S$ has commands on.
$U$ must also contain commands on the same set of nodes,
and it must contain the same commands,
as for the command $\cxynv$, $[x]$ must be $[\FS(n)]$, and $y$ must be $\FS^*(n)$.
From this we know that $U$ is a permutation of $S$ and so $\orderset{U}=\orderset{S}$.

As the update detector uses the $\orderrel$ relation to order the commands in $U$, trivially $U\in\orderset{U}$,
which means $U\in\orderset{S}$,
and so from Lemma\nobreakspace \ref {simple_reorder_equiv} we know that $U\equiv S$,
from which $U\aFS=\FS^*$.
\end{proof}

\section{Reconciliation}\mlabel{sec_rec}

If we start with two copies of a filesystem $\FS$,
and two different sequences are applied to the copies to yield $\FS_A=A\aFS$
and $\FS_B=B\aFS$, then our aim is to define sequences of commands $\reca$ and $\recb$
so that $\recb\aFS_A$ and $\reca\aFS_B$ would be as close to each other as possible.

We work based on the assumption that to achieve this, we need
to apply to $\FS_B$ the commands that have been applied to $\FS_A$, and \emph{vice versa}.
As some commands may have been applied to both filesystems, our first approximation
is to apply the commands in $\bmanp$ to $\FS_A$ and those in $\ambnp$ to $\FS_B$.
This, however, will break both filesystems if there have been incompatible updates
in $A$ and $B$. 
Our aim is therefore to provide an algorithm that selects the commands 
$\reca \subset A\setminus B$
and $\recb \subset B\setminus A$ 
so that $\reca\aFS_B\neq\fsbroken$ and $\recb\aFS_A\neq\fsbroken$,
and show that these are the longest sequences with this property.

We will work based on update sequences $A$ and $B$ that are simple.
If the reconciliation algorithm is to work on arbitrary sequences,
Theorem\nobreakspace \ref {can_simplify} can be used to convert these sequences to simple ones.

\begin{mydef}[Reconciliation]\mlabel{def:reconciliation}
For two simple sequences $A$ and $B$,
$\reca$ is a valid ordering of
the largest subset of $A\setminus B$
that is independent of $B\setminus A$:
\[ \reca = \ords{\alpha \whr \alpha\in \ambnp  \mbox{~and~}  \alpha\indep \bmanp }. \]
$\recb$ can be obtained in a similar way by reversing $A$ and $B$
in the definition.
\end{mydef}

Note that $\alpha\in\reca$ is not an assertion command, and $\bmanp$ contains no assertion commands,
and $\alpha\not\in\bmanp$, therefore $\alpha\indep\bmanp$ is equivalent to
the node of $\alpha$ being incomparable to the nodes of the commands in $\bmanp$.
Therefore whether two commands are independent 
for the purposes of reconciliation
is easy to determine programmatically.
Then, to generate the sequence $\reca$ from the resulting set, 
apply topological sorting according to $\orderrel$ as defined in Section\nobreakspace \ref {ordering}.

\myskip
We aim prove that $\reca$ and $\recb$ can be applied to the replicas,
that is, without loss of generality,
\[ \reca\aFS_B\neq\fsbroken. \]
So that we can formalize statements needed to prove this,
we introduce the partial order $\worksmeqsign$ that describes conditions under which
sequences of commands work, that is, do not break a filesystem.

\subsection{Domains of Sets of Command Sequences}

\begin{mydef}[Sets of sequences, their domains]
As we will frequently refer to sets of sequences,
we will use calligraphic letters (e.g. $\sqs{A}$, $\sqs{B}$, $\sqs{C}$ and $\sqs{S}$)
to denote such sets for brevity.
We will write $\Dom{\sqs{A}}$ for $\bigcap_{A\in\sqs{A}} \Dom{A}$,
and $\sqs{A}\cc\sqs{B}$ for $\{A\cc B\whr A\in\sqs{A}, B\in\sqs{B}\}$.
\end{mydef}

\begin{mydef}[$\worksmeqsign$]\mlabel{def_works}
For two sets of sequences $\sqs{A}$ and $\sqs{B}$
we write $\worksm{\sqs{A}}{\sqs{B}}$ iff $\Dom{\sqs{A}} \subseteq \Dom{\sqs{B}}$;
that is, iff all sequences in $\sqs{B}$ are defined on (do not break)
all filesystems on which sequences in $\sqs{A}$ are defined.
We write $\sqs{A}\worksmsign\sqs{B}$ iff $\Dom{\sqs{A}} \subset \Dom{\sqs{B}}$.
\end{mydef}

When $\sqs{A}$ or $\sqs{B}$ contains a single sequence,
we leave out the curly brackets and write,
e.g. $A\cc\sqs{B}$ to mean $\{A\}\cc\sqs{B}$,
or $\worksm{A}{B}$ to mean $\worksmbb{A}{B}$.
Also, we write $\worksmnil{\sqs{A}}$ to mean that
there is at least one filesystem on which all sequences in $\sqs{A}$ are defined.
If $\sqs{A}$ contains a single sequence only, $A$, this is equivalent to $\wrks{A}$.

We can see that $A\eqext B$ implies $\worksm{A}{B}$, as the latter
only requires that $B$ is defined where $A$ is defined, 
while the former also requires
that where they are defined, their effect is the same.

The following claims follow from the definition:

\begin{myclm}\mlabel{worksextpostfix}
$\forall A,S: \worksm{A\cc S}{A}$, that is, if a sequence is defined,
its initial segments are also defined.
\end{myclm}

\begin{myclm}\mlabel{works_restricted}
$\forall \sqs{A},\sqs{B},S: \worksm{\sqs{A}}{\sqs{B}} \Rightarrow \worksm{S\cc \sqs{A}}{S\cc \sqs{B}}$,
that is,
the relationship between the domains of the sets $\sqs{A}$ and $\sqs{B}$ 
does not change if both are prefixed by a sequence $S$.
\end{myclm}
\begin{proof}
This is because the sequence $S$, as an endofunction on filesystems, is a binary relation,
and we can treat its inverse relation as a one-to-many mapping between filesystems
that maps $\Dom{\sqs{A}}$ to $\Dom{S\cc\sqs{A}}$ and $\Dom{\sqs{B}}$ to $\Dom{S\cc\sqs{B}}$.
As such a one-to-many mapping maps a subset of a set to a subset of the image of the set,
if $\Dom{\sqs{A}}\subseteq\Dom{\sqs{B}}$, then $\Dom{S\cc\sqs{A}}\subseteq\Dom{S\cc\sqs{B}}$.
\end{proof}

We proceed by proving the following lemmas.

\begin{mylem}\mlabel{combine_independent_commands}
The combination of independent commands is defined on all filesystems
where both of the commands are defined:
\[ \alpha\indep \beta \Rightarrow \worksmbx{\alpha, \beta}{\alpha\cc \beta}. \]
\end{mylem}
\begin{proof}
The proposition uses Lemma\nobreakspace \ref {independent_details},
and we follow the three cases listed there.
Let $\alpha=\cxynv$ and $\beta=\czwmv$.

The first case is that $n\unrel m$ where $\alpha=\cxynv$, $\beta=\czwmv$.
Assume, by contradiction, that for a filesystem $\FS$
both $\cxynv\aFS$ and $\czwmv\aFS$ are defined, but $(\cxynv\cc\czwmv)\aFS$ is broken.
Since $n\neq m$, we know the input types of the command are compatible with $\FS$,
so $(\cxynv\cc\czwmv)\aFS$ must be broken because it violates the tree property.

We know that applying $\czwmv$ breaks $\cxynv\aFS$. As it changes the filesystem
at $m$, this can be because either $m$ would no longer be a directory but has non-empty children,
or $\parentf{m}$ is not a directory but acquired non-empty children.
Let us look at the first case.
Since $n\indep m$, we know the parent and children of $m$ have the same value
in $\FS$ as in $\cxynv\aFS$. However, this leads to contradiction
as applying $\czwmv$ to $\cxynv\aFS$ leads to a broken filesystem, but
applying it to $\FS$ does not.

Therefore $\parentf{m}$ must not be a directory in $\FS$ but acquire non-empty children.
Since $\czwmv\aFS$ is defined, it must be $\cxynv$ that changes the children of
$\parentf{m}$ in this way, which means that $n$ and $m$ must be siblings.
As then the value at $\parentf{n}=\parentf{m}$ is not changed by either command,
we know it cannot be a directory in $\FS$, either, and that therefore
$\FS$ is empty at all children of the parent node.
This means that $[x]=[z]=[\empt]$, and
since both $\cxynv\aFS$ and $\czwmv\aFS$ are defined, $y=w=\empt$ must also hold.
However, this contradicts our assumption that $(\cxynv\cc\czwmv)\FS$ is broken.

The second case is that the two commands are the same and are assertion or replacement
commands, when the proposition is trivially true.
Finally, the third case is that the two commands are on comparable nodes and one of them
is an assertion command, when the proposition is again trivially true.
\end{proof}

Lemma\nobreakspace \ref {combine_independent_commands} extends to sequences as well:

\begin{mylem}\mlabel{combine_independent_sequences}
The combination of independent sequences is defined on all filesystems
where both of the sequences are defined:
\[ S\indep T \Longrightarrow \worksmbx{S,T}{S\cc T}. \]
\end{mylem}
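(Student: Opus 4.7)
The plan is to bootstrap Rule~\ref{combine_independent_commands} from single commands to sequences via nested inductions, together with a small preliminary observation about commuting an independent command past a whole sequence. That observation is: if a single command $\beta$ is independent of every command in a sequence $S$, then $S\cc\beta \equiv \beta\cc S$. I would prove it by induction on $|S|$, with the empty case trivial and the inductive step writing $S = \alpha\cc S'$, applying $\alpha\cc\beta \equiv \beta\cc\alpha$ from the definition of independence, and using the inductive hypothesis on $S'$; preservation of $\equiv$ under prefixing and postfixing glues the steps together.

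Second, I would handle the special case in which $T$ consists of a single command $\beta$: namely, $S\aFS\neq\fsbroken$ and $\beta\aFS\neq\fsbroken$ together imply $(S\cc\beta)\aFS\neq\fsbroken$, whenever $S$ is independent of $\{\beta\}$. This is an induction on $|S|$; the base case $|S|=1$ is Rule~\ref{combine_independent_commands} verbatim. For $|S|\geq 2$ I split $S = \alpha\cc S'$, use Corollary~\ref{worksextpostfix} to get $\alpha\aFS\neq\fsbroken$ and $S'(\alpha\aFS)\neq\fsbroken$, apply Rule~\ref{combine_independent_commands} to $\alpha$ and $\beta$ to obtain $\beta(\alpha\aFS)\neq\fsbroken$, and then appeal to the inductive hypothesis on $S'$ at the filesystem $\alpha\aFS$.

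Finally, I prove the lemma itself by induction on $|T|$, the base case $T=\emptyseq$ being immediate. For the inductive step I write $T=\beta\cc T'$, so that both $\beta\aFS$ and $T'(\beta\aFS)$ are non-broken. Setting $\FS' = \beta\aFS$, I want to invoke the inductive hypothesis on $S$ and $T'$ at $\FS'$, which requires checking $S\FS'\neq\fsbroken$. Here I rewrite $S\FS' = S(\beta\aFS) = (\beta\cc S)\aFS$, which by the preliminary observation equals $(S\cc\beta)\aFS$, and this is non-broken by the single-command case. The inductive hypothesis then yields $(S\cc T')\FS'\neq\fsbroken$, and the same equivalence rearranges $(S\cc T)\aFS = (S\cc\beta\cc T')\aFS \equiv (\beta\cc S\cc T')\aFS = (S\cc T')\FS'$, completing the proof.

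The main obstacle is recognising that a single induction on $|T|$ alone does not close the argument: even the outer inductive step needs the $|T|=1$ case as an auxiliary, because the intermediate filesystem $\FS' = \beta\aFS$ is not directly handed to us as a filesystem on which $S$ is known to work. That single-command case in turn needs its own induction on $|S|$. Once this two-layer structure is in place, everything else is routine, driven by the definition of independence, the commutation observation, and Rule~\ref{combine_independent_commands}.
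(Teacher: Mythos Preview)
Your proof is correct. Both your argument and the paper's hinge on the same two ingredients---pairwise commutation of independent commands and repeated application of Rule~\ref{combine_independent_commands}---but the organization differs. The paper argues by contradiction: it assumes $(S\cc T)\FS=\fsbroken$, locates the first offending command $t$ in $T$ (so $T=T_0\cc t\cc T_1$ with $(S\cc T_0)\FS\neq\fsbroken$ but $(S\cc T_0\cc t)\FS=\fsbroken$), commutes to get $(T_0\cc S)\FS\neq\fsbroken$, and then pushes $t$ past the commands of $S$ one by one using Rule~\ref{combine_independent_commands} at the filesystem $T_0\FS$, reaching the contradiction $(T_0\cc S\cc t)\FS\neq\fsbroken$. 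Your version is direct and more explicitly layered: a commutation lemma, then the $|T|=1$ case by induction on $|S|$, then the general case by induction on $|T|$. The paper's route is shorter on the page because the contradiction isolates exactly the single command that matters, collapsing your outer induction into a one-shot localization; your route is more modular and makes the two inductions visible rather than hiding one inside the phrase ``repeat this step.'' Either way the content is the same.
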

\begin{proof}
Assume, by contradiction, that there is a filesystem $\FS$ so that
$S\aFS\neq\fsbroken$ and $T\aFS\neq\fsbroken$, but
$(S\cc T)\FS=\fsbroken$.

From Definition\nobreakspace \ref {def_indep}
the commands in $S$ and $T$ pairwise commute, and so any sequence
that contains the commands from $S$ and $T$ and preserve their original partial order
is equivalent to $S\cc T$ on all filesystems.

Let the command in $T$ that breaks $\FS$ when applying $S\cc T$ be $t$
so that $T=T_0\cc t\cc T_1$.
It is still true that $(T_0 \cc t)\FS\neq\fsbroken$,
and by definition $(S\cc T_0)\FS\neq\fsbroken$,
but $(S\cc T_0\cc t)\FS=\fsbroken$.
Also, from above we know that $S\cc T_0\equiv T_0\cc S$
and so $(T_0 \cc S)\FS\neq\fsbroken$.

If we denote the first command in $S$ with $s_1$,
this means that $(T_0 \cc s_1)\FS\neq\fsbroken$,
which we can combine with $(T_0 \cc t)\FS\neq\fsbroken$, $t\indep s_1$ and
Lemma\nobreakspace \ref {combine_independent_commands}
(using $T_0\FS$ as the reference filesystem)
to arrive at $(T_0 \cc s_1\cc t)\FS\neq\fsbroken$.

We can repeat this step for $s_2$, the next command in $S$,
and from 
$(T_0 \cc s_1\cc t)\FS\neq\fsbroken$
and
$(T_0 \cc s_1\cc s_2)\FS\neq\fsbroken$
arrive at
$(T_0 \cc s_1\cc s_2\cc t)\FS\neq\fsbroken$.
This can be repeated until $S$ is exhausted and we get
$(T_0 \cc S\cc t)\FS\neq\fsbroken$, which is a contradiction.
\end{proof}

We also prove the following:

\begin{mylem}\mlabel{worksinputmatch}
If $S$ and $T$ are minimal sequences, $\worksmnilb{S,T}$,
and there are commands $\cxynv\in S$ and $\czwnv\in T$ on the same node $n$,
then the input types of these commands must match, i.e. $[x]=[z]$.
\end{mylem}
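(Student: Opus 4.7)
The plan is to leverage the fact that minimal sequences contain at most one command per node, so the command on $n$ in each sequence is applied to a filesystem whose value at $n$ has not yet been modified by that sequence. Since both sequences work on some common $\FS$, the type of $\FS(n)$ must simultaneously match both input types $X$ and $Z$, forcing $X=Z$ because $\setb$, $\setf$, $\setd$ are disjoint.

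More concretely, I would first unpack $\works{A,B}$ to obtain a filesystem $\FS$ with $A\aFS\neq\fsbroken$ and $B\aFS\neq\fsbroken$. Then, using minimality of $A$, I would split $A = A_0 \cc \cxynv \cc A_1$ where $A_0$ contains no command on $n$. By Corollary~\ref{worksextpostfix}, the prefix $A_0 \cc \cxynv$ also works on $\FS$, so in particular $A_0\aFS\neq\fsbroken$ and applying $\cxynv$ to $A_0\aFS$ does not break it. The latter means $(A_0\aFS)(n)\in\setvx{X}$ by the definition of the effect of commands.

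The key observation is then that $(A_0\aFS)(n) = \FS(n)$. This holds because each command in $A_0$ acts on a node $m\neq n$; in the branch where the command does not break the filesystem (the only branch that survives, by Corollary~\ref{worksextpostfix} applied to every prefix), its effect is to replace the value at $m$ only, leaving the value at $n$ unchanged. A short induction on the length of $A_0$, invoking the definition of $\FS[\valv/m]$, then yields $(A_0\aFS)(n)=\FS(n)$, and so $\FS(n)\in\setvx{X}$. The same argument applied to $B$ and $\czwnv$ gives $\FS(n)\in\setvx{Z}$.

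The main (mild) obstacle is making sure the induction in the middle step is clean — specifically, justifying that no command in $A_0$ changes $\FS(n)$ requires knowing that each intermediate filesystem is not broken, which is exactly what Corollary~\ref{worksextpostfix} provides. Once $\FS(n)$ lives in both $\setvx{X}$ and $\setvx{Z}$, disjointness of $\setb,\setf,\setd$ forces $X=Z$, completing the proof.
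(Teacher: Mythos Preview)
Your proposal is correct and follows essentially the same approach as the paper's proof: both argue that a common filesystem $\FS$ on which $A$ and $B$ work must have $\FS(n)$ of type $X$ and of type $Z$, hence $X=Z$. The paper compresses this into a two-line indirect argument, while you spell out explicitly (via minimality, Corollary~\ref{worksextpostfix}, and the replacement-operator semantics) why the value at $n$ is unchanged up to the point the command on $n$ is applied; this extra detail is sound and arguably clarifies a step the paper leaves implicit.
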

\begin{proof}
This result is similar to Lemma\nobreakspace \ref {equiv_simple_same_commands}.
If, by contradiction, $[x]\neq [z]$, then there would be no filesystem that
either $\cxynv$ or $\czwnv$ would not break, 
and consequently $S$ and $T$ could not work on the same filesystem.
\end{proof}

\subsection{The Correctness of Reconciliation}

We are now ready to prove that the proposed algorithm for reconciliation is correct,
that is, applying its result is not going to break the replicas.
We reformulate the original proposition ($\reca\aFS_B\neq\fsbroken$)
based on $\FS_B=B\aFS$, and so we aim to prove that
$B\cc\reca$ is defined wherever $A$ and $B$ are defined:

\begin{myth}\mlabel{reconciliation_correct}
If $A$ and $B$ are simple, then $\worksmbx{A,B}{B\cc \reca}$,
where, to restate Definition\nobreakspace \ref {def:reconciliation},
\[ \reca = \ords{\alpha \whr \alpha\in \ambnp  \mbox{~and~}  \alpha\indep \bmanp }. \]
\end{myth}

This is trivial unless $\worksmnilb{A,B}$, so we assume that it is true.
Let us first investigate the part of $A$ and $B$ that is excluded from
$\reca$: their intersection.

\begin{mylem}\mlabel{can_move_intersection}
Let $A$ and $B$ be two simple sequences for which $\worksmnilb{A, B}$.
Then $A$ can be separated into their intersection and the remaining commands:
\[ \acb\cc\amb \in \orderset{A}. \]
\end{mylem}

\begin{proof}
Mark those commands in $A$ that are also in $B$.
We show that
there is an $A'$ in $\orderset{A}$ in which all marked commands are at the beginning.

We know that if the marked commands are not at the beginning, then
the sequence contains an unmarked command followed by a marked command.
We show that these can be swapped resulting in an equivalent sequence.
Then, by repeating this process similarly to bubble sorting, we arrive at 
a suitable permutation of $A$, which is also equivalent to $A$, and therefore is in $\orderset{A}$.

Let us consider therefore the marked command preceded by an unmarked command in $A$,
and let the marked command be $\cxynv$, 
and the preceding unmarked command be $\czwmv$:
\begin{gather*}
A = \cdots\cc  \czwmv\cc  \cxynv\cc  \cdots
\end{gather*}
As $A$ is simple and $\wrks{A}$, from 
Rule\nobreakspace \ref {ax_distantrel_breaks} and\nobreakspace Lemma\nobreakspace \ref {simple_distant_pairs}
we know that these commands can only be on incomparable nodes or form a construction or destruction pair.
In the first case, swapping the commands results in a sequence equivalent to $A$,
and we show that the last two cases are impossible.

In the cases then $\czwmv\cc\cxynv$ is either a construction or a destruction pair,
and, depending on which is the case,
$m=\parent(n)$ or $n=\parent(m)$, respectively.

If $B$ has a command on $m$, let it be $\cqrmv$.
As Lemma\nobreakspace \ref {simple_distant_pairs} applies to $B$, we know that
$\cxynv$ (which is in $B$), and $\cqrmv$ must also form a construction
or destruction pair.
As $\cxynv$ cannot be both a construction and a destruction command,
this pair must be of the same type as $\czwmv\cc\cxynv$,
and as the relationship between $n$ and $m$ is given,
we know $\cqrmv$ must precede $\cxynv$ in $B$.

Because the output value of the first command in construction
and destruction pairs is determined by the type of the pair,
$r=w$.
Also, as $\worksmnilb{A,B}$, 
from Lemma\nobreakspace \ref {worksinputmatch}
we know that $[q]=[z]$.
Therefore $\cqrmv=\czwmv$,
which is a contradiction as $\czwmv$ was not marked,
but we see it must also be in $B$.

The last remaining case is when there is no command on $m$ in $B$.
Let $B'$ be $B$
with an extra assertion command added just before $\cxynv$
according to Rules\nobreakspace \ref {ax_child_assert} and\nobreakspace  \ref {ax_parent_assert}, 
from which we know that $B'\equiv B$.
Let the new command be $\cqrmv$.
As $B'$ is still minimal (but no longer simple),
the argument above applies, and we again know
that $[q]=[z]$ and $r=w$.
As $\cqrmv$ is an assertion command, $[q]=[r]$ also holds,
from which $[z]=[w]$ (as $[z]=[q]=[r]=[w]$), which is a contradiction
as $A$ contains no assertion commands.
\end{proof}

We aim to prove that $\worksmbx{A,B}{B\cc \reca}$.
From Lemma\nobreakspace \ref {can_move_intersection} above we know that we can move commands in $\acbnp$
to the beginning of $A$ and $B$, and so this claim is equivalent to
\[ \worksm{\big\{\acb\cc\amb,~ \acb\cc\bma\big\}}{\acb\cc \bma\cc \reca}. \]

For ease of reading, let us rename
 $\amb$ to $S$,
 $\reca$ to $S^*$,
 $\bma$ to $T$,
 and $\acb$ to $U$.
In this notation, we intend to prove that
\[ \worksmbx{U\cc S,U\cc T}{U\cc T\cc S^*}. \]
We can do so if we can show that
\[ \worksmbx{S,T}{S^*}. \]

This is because from $\worksmbx{S,T}{S^*}$ trivially $\worksmbb{S,T}{S^*, T}$,
and from $S^* \indep T$ and Lemma\nobreakspace \ref {combine_independent_sequences} we know 
$\worksmbx{S^*,T}{T\cc S^*}$.
Combining the two we get $\worksmbx{S,T}{T\cc S^*}$,
which using Claim\nobreakspace \ref {works_restricted} yields
$\worksm{U\cc\{S,T\}}{U\cc T\cc S^*}$, that is,
$\worksmbx{U\cc S,U\cc T}{U\cc T\cc S^*}$.

To restate some results above, we therefore already know the following:
\begin{itemize}
\item $S=\amb$ and $T=\bma$ are simple as $A$ and $B$ are simple;
\item $S\cap T = \amb \cap \bma = \emptyset$;
\item $S^*$ is the largest subset of $S$ for which $S^*\indep T$, as
$\reca$ is the largest subset of $\amb$ for which $\reca\indep\bma$.
\end{itemize}
Our theorem is therefore proven if we prove that these imply the proposition above:
\newcommand{\condSimple}{(c1)}
\newcommand{\condDisj}{(c2)}
\newcommand{\condApr}{(c3)}
\begin{mylem}\mlabel{reconciliation_correct_part}
If
{\rm\condSimple} $S$ and $T$ are simple sequences,
{\rm\condDisj} $S\cap T=\emptyset$,
and {\rm\condApr} $S^*$ is the largest subset of $S$ where $S^*\indep T$,
then
$\worksmbx{S,T}{S^*}$.
\end{mylem}
\begin{proof}
This is trivial unless $\worksmnilb{S,T}$, so we assume that it is the case.
The proof is similar to that of Lemma\nobreakspace \ref {can_move_intersection}.
We mark all commands in $S$ that are in its subset $S^*$, and
we prove that $S$ can be transformed into $S'$
where all marked commands are at the beginning.
If so, then 
$\worksm{S}{S^*}$ based on Claim\nobreakspace \ref {worksextpostfix},
from which we get $\worksmbx{S,T}{S^*}$.

Again we know that if $S$ does not already have all marked commands at its beginning,
then there is an unmarked command followed by a marked one.
We show that these commands are independent and so they can be swapped
resulting in an equivalent sequence.
By repeating this process we can generate a suitable $S'$.

Let therefore the marked command in $S$ be $\cxynv$
and the preceding unmarked command be $\czwmv$.
As $S$ is simple and $\wrks{S}$, from 
Rule\nobreakspace \ref {ax_distantrel_breaks} and\nobreakspace Lemma\nobreakspace \ref {simple_distant_pairs}
we know that these commands can only be on incomparable nodes or form a construction or destruction pair.
In the first case, swapping the commands results in a sequence equivalent to $S$,
and we show that the other two cases are not possible as they would lead to contradiction.

In the last two cases, we know that either $m=\parent(n)$ or $n=\parent(m)$.
We also know that because of {\condApr} there must be 
a command $\cqrov$ in $T$ which is not independent of $\czwmv$
as $\czwmv$ is not part of $S^*$.
Based on Corollary\nobreakspace \ref {incomparable_is_independent} we know this means that
either $m\descendantEq o$ or $o\descendantEq m$.
We know none of these commands is an assertion command, and 
from {\condDisj} that $\cqrov\neq\czwmv$.
Also, from {\condApr} we know that $\cxynv\indep \cqrov$,
and so because of Corollary\nobreakspace \ref {incomparable_is_independent},
$n\unrel o$.

We therefore have four cases considering the relationships between $n,m$ and $o$:
\begin{itemize}
\item $n=\parent(m)$ and $o\descendantEq m$.
   This would mean that $o\descendantEq n$ or $n=\parent(o)$ (if $o=m$), which contradicts $n\unrel o$.
\item $n=\parent(m)$ and $m\descendantEq o$.
   This would mean that $n\descendant o$, which contradicts $n\unrel o$.
\item $m=\parent(n)$ and $o\descendantEq m$.
   This would mean that $o\descendant n$, which contradicts $n\unrel o$.
\item $m=\parent(n)$ and $m\descendantEq o$.
   Let us continue with this case.
\end{itemize}

As $m=\parent(n)$, we know $\czwmv\cc\cxynv$ must be a construction pair.
We know that $T$ cannot have a command on $m$ as $m=\parent(n)$ and $\cxynv\indep T$,
and so $m\neq o$ and therefore $m\descendant o$.
This means we can create a new sequence $T'\equiv T$ by inserting a command on 
$m$ into $T$ before $\cqrov$
according to Rule\nobreakspace \ref {ax_parent_assert}, the input type of which is $[\vald]$.
As $T'$ is still minimal, and $\worksmnilb{S,T}$,
from Lemma\nobreakspace \ref {worksinputmatch} we get $[z]=[\vald]$, which is a contradiction,
as $\czwmv$ is the first command in a construction pair.
\end{proof}

\subsection{Reconciliation is Maximal}

The reconciliation algorithm defined above is also maximal, that is,
it is not possible to apply any further commands from $\ambnp$ to $\FS_B$.
To show this, we are going to prove that any sequence $S$ formed from commands
in $\ambnp$ that is not independent of $\bmanp$ necessarily breaks $B\aFS=\FS_B$
or introduces a conflicting update.

\begin{myth}\mlabel{rec_is_complete}
If $A$ and $B$ are simple sequences,
and $S$ is a subsequence of $S'\in\orderset{\ambnp}$ that
contains a command $\cxynv$ for which $\cxynv\nindep\bmanp$,
then $B\cc S$ either breaks all filesystems $A$ and $B$ are defined on
(that is, $S$ breaks all possible $\FS_B$ replicas),
or $S$ changes a node that $B$ has already changed to a different value,
that is, it overrides a change in $\FS_B$.
\end{myth}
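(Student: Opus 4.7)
The plan is to pick a witness $\beta = \czwmv \in \bma$ with $\cxynv \nindep \beta$. Since $A$ and $B$ are simple, neither command is an assertion, and since $\amb$ and $\bma$ are disjoint the two commands are distinct; so Lemma\nobreakspace\ref{incomparable_is_independent} forces $n \nunrel m$. If no $\FS$ satisfies $A\aFS, B\aFS \neq \fsbroken$ the theorem is vacuous, so I fix such an $\FS$ and aim to show that $B\cc S$ either breaks $\FS$ or overrides a change of $B$.

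In the case $n = m$, Lemma\nobreakspace\ref{worksinputmatch} applied to $A$ and $B$ gives $X = Z$. Because the commands are distinct on the same node with matching input type their outputs differ, and since $\setv = \setb \cup \setf \cup \setd$ is a partition into disjoint sets this forces $\valvy \neq \valvw$. By simplicity, $\czwmv$ is the only command on $n$ in $B$ and $\cxynv$ the only one in $S$, so after $B$ the node $n$ holds $\valvw$, and if $\cxynv$ succeeds it replaces this with $\valvy \neq \valvw$---an override. If $\cxynv$ does not succeed then $B\cc S$ breaks $\FS$ directly.

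In the case $n \neq m$ with the nodes comparable, I would proceed by contradiction: assume $B\cc S$ neither breaks $\FS$ nor overrides any change of $B$. Any node shared between $B$ and $S$ would, by Lemma\nobreakspace\ref{worksinputmatch}, carry commands of matching input type; under the no-override assumption these commands would also share output value and hence coincide, contradicting $S\subset \amb$ together with the other command being in $B$. So $S$'s commands live on nodes $B$ did not touch; in particular $B$ has no command on $n$, giving $\FS_B(n) = \FS(n) = X$. Applying Rules\nobreakspace\ref{ax_distantrel_breaks} to\nobreakspace\ref{ax_directparent_breaks} to the non-independent pair $(\cxynv, \czwmv)$, together with the tree property, I would then show that for $\cxynv$ to apply successfully inside $A$ the sequence $A$ must contain a specific command $\alpha_m$ on $m$ (e.g.\ turning $m$ into a directory, or emptying it, depending on the direction of the ancestry) whose input type, output type, node and output value are all pinned by Lemma\nobreakspace\ref{worksinputmatch} and by $|\setb| = |\setd| = 1$. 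In each sub-case the forced parameters coincide exactly with those of $\czwmv$, so $\alpha_m = \czwmv$ and hence $\czwmv \in A$, contradicting $\czwmv \in \bma$. The case where $n$ and $m$ are distant, rather than adjacent, is handled analogously after invoking Lemma\nobreakspace\ref{connected_changes} to propagate the same forcing up the intermediate path. The hard part is this sub-case analysis, and the delicate point is that the forced $\alpha_m$ must be pinned to exactly the same quadruple as $\czwmv$ in every configuration of the ancestry between $n$ and $m$, which is what turns the no-override assumption into a concrete shared command between $A$ and $B$.
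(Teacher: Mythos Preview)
Your handling of the $n=m$ case is essentially the same as the paper's and is correct.

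The $n\neq m$ case, however, has a genuine gap. You invoke Rules~\ref{ax_distantrel_breaks}--\ref{ax_directparent_breaks} on the pair $(\cxynv,\czwmv)$ to constrain its types, but those Rules apply to \emph{adjacent} commands in a sequence, whereas $\cxynv$ lives in $S$ and $\czwmv$ lives in $\bma$, and nothing you have established makes them adjacent in $B\cc S$. Knowing only $\cxynv\nindep\czwmv$ tells you $n\nunrel m$; it does not pin down $W$. In particular, your claim ``for $\cxynv$ to apply successfully inside $A$ the sequence $A$ must contain a command on $m$'' does not follow from what you have: take $m=\parent(n)$, $\FS(m)=\vald$, $\FS(n)=\empt$, $A=\cbfaa{n}{\valf}$, $B=\cdba{m}$. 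Then $\cxynv=\cbfaa{n}{\valf}$ applies in $A$ without $A$ touching $m$ at all, so your forcing of an $\alpha_m$ fails. (Of course $B\cc S$ does break $\FS$ here, so the contradiction hypothesis is vacuously false in this instance---but the point is that your \emph{derivation} of the contradiction is invalid as written.)

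What the paper does, and what your argument is missing, is a rearrangement of $B\cc S$ that brings $\czwmv$ and $\cxynv$ next to each other in an equivalent working sequence. To make this possible the paper does not take arbitrary witnesses: it re-selects $\cxynv$ as the \emph{first} command in $S$ not independent of $\bma$ (so the prefix $S_0$ of $S$ commutes past $\bma$), and $\czwmv$ as the \emph{last} command in $\bma$ not independent of $\cxynv$ (so the suffix $T_1$ of $\bma$ commutes past $\cxynv$). After moving $\acbnp$ to the front via Lemma~\ref{can_move_intersection}, one obtains an equivalent sequence containing the subword $\czwmv\cc\cxynv$. If this subword is on distant relatives, Rule~\ref{ax_distantrel_breaks} gives an immediate contradiction (no separate ``propagation along the path'' via Lemma~\ref{connected_changes} is needed); otherwise it must be a construction or destruction pair, which is precisely what pins $W\in\{\cchard,\ccharb\}$ and lets $|\setd|=|\setb|=1$ force $\alpha_m=\czwmv$. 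Without this adjacency, neither the parent--child relation between $n$ and $m$ nor the output type $W$ is determined, and the identification $\alpha_m=\czwmv$ cannot be made.
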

Such an override could occur if a given node was modified differently in
$A$ and $B$ (to different values but to values of the same type), which 
our algorithm must treat as a conflict to be resolved by
by the user or a different system.
\begin{proof}
The proof is similar to the ones we have seen above.
Let $T$ be an arbitrary sequence from $\orderset{\bmanp}$.
Without loss of generality, we assume that $\cxynv$ is the first command in $S$
that is not independent of $T$.
If so, we can split $S$ into $S_0\cc\cxynv\cc S_1$ where $S_0\indep T$.

Let $\czwmv$ be the last command in $T$ that is not independent of $\cxynv$,
and split $T$ into $T_0\cc\czwmv\cc T_1$, where therefore $T_1\indep\cxynv$.

From Lemma\nobreakspace \ref {can_move_intersection} we know that
$\acb\cc\bma \in \orderset{B}$, and so
$B\cc S \equiv \acb\cc\bma\cc S \equiv \acb\cc T\cc S$.
We also know that commands in $S_0$ and $T$ commute, and so this is equivalent to
$\acb\cc S_0\cc T\cc\cxynv\cc S_1$.
Expanding $T$ and swapping $T_1$ and $\cxynv$ we get
\[ \acb\cc S_0\cc T_0 \cc \czwmv \cc \cxynv\cc T_1\cc S_1. \]

First, we prove that this sequence breaks all filesystems
where $A$ and $B$ are defined unless $n=m$.
Let us therefore suppose $n\neq m$ and, to use an inverse proof,
that the sequence is defined on $\FS$ 
where $A\aFS\neq\fsbroken$ and $B\aFS\neq\fsbroken$.
If so, its initial segment,
\[ \acb\cc S_0\cc T_0 \cc \czwmv \cc \cxynv, \]
must also be defined on $\FS$.

We know $\cxynv\nindep\czwmv$, and so from Corollary\nobreakspace \ref {incomparable_is_independent}, $n\nunrel m$.
As the sequence is defined on $\FS$,
because of Lemma\nobreakspace \ref {simple_distant_pairs}
$\czwmv\cc\cxynv$ can only be a construction or destruction pair.

Also, because $B$ is simple and contains $\czwmv$,
we know there are no commands on $m$ 
in either $\acb$ or $T_0$, which are both formed from other commands in $B$.
Moreover, we know there are no commands on $m$ in $S_0$ because $S_0\indep T$.
We therefore know that $\FS(m)$ must be of type $[z]$.

If $m=\parentf{n}$ then $\czwmv\cc\cxynv$ is a construction pair, and we know
$[\FS(m)]=[z]\neq[\vald]$ and $y\neq\empt$ as none of these commands is an assertion command.
If so, $\fsreplacement{\FS}{n}{y}$ 
(that is, applying $\cxynv$ to $\FS$ without modifying $\FS(m)$ first) would violate the tree property.
The same is true when $\parentf{m}=n$, where we have a destruction pair,
and $[z]\neq\empt$ and $[y]\neq[\vald]$.

Since $\cxynv\in S$ and therefore $\cxynv\in A$, and $A\aFS\neq\fsbroken$,
the above means that $A$ must also contain a command on $m$ before $\cxynv$.
From Lemma\nobreakspace \ref {simple_distant_pairs} we then know that $\cqrmv\cc\cxynv$
is also a construction or destruction pair, and it is the same type as
$\czwmv\cc\cxynv$.

Because the type of the pair determines the output type of the first command,
this means that $[r]=[w]$, and they can either be $[\empt]$ or $[\vald]$.
As we assumed that $|\setd|=1$, we therefore know that $r=w$.
From Lemma\nobreakspace \ref {worksinputmatch} we also know that $[q]=[z]$.

Therefore $\cqrmv = \czwmv$, which is a contradiction
because $\cqrmv$ was selected from $A$, and
$\czwmv$ was selected from $T$, that is, from $\bmanp$.

The only possibility is therefore $n=m$.
As $A$ and $B$ are simple, we then know that
$\cxynv$ and $\czwmv$ are the first commands on $n$,
and so (again from Lemma\nobreakspace \ref {worksinputmatch}) $[z]=[x]$.
We also know $w\neq y$ as otherwise the two commands would be equal and
would be in $\acbnp$.
This, however, means that $\cxynv$, from $A$, overrides a change introduced 
by $\czwnv$, from $B$, which must be treated as a conflict.
\end{proof}

\section{Extending the Synchronizer}\mlabel{sec_multidir}

The two assumptions we used to increase the symmetry of our model were
that there is no \emph{move} command,
and that
there is only one directory value, that is, directories are not differentiated
by the meta-information they contain.
In this section we describe reasons why the latter assumption may not limit the applicability
of even the current model, and we also describe encoding and decoding steps
to overcome the two restrictions.

We note that commercial designs for synchronizers 
often avoid considering metadata in directories as
as these are generally not understood well by users,
and, if needed, conflict resolution on these settings can be easily automated.
See, for example, \cite{BZ} and the seminal work of Balasubramaniam and Pierce and the Unison synchronizer \cite{BP}.

Despite the above, there can be applications where directory metadata
is considered important.
Synchronizers based on our model can be readily extended to handle them
by duplicating $\setn$. 
Given the input filesystems over $\setn$, we add a special node as an extra child under each original node,
and we encode directory metadata in a file under each directory.
It is easy to see that update detection and conflict resolution can continue as expected,
with the only exception of a potential conflict detected on one of these special nodes.
In these cases the synchronizer may prescribe creating a directory without creating
the special metadata file, which is clearly not possible to do on the target filesystem,
as creating a directory entails specifying its metadata as well.
In these cases, the synchronizer can either fall back to default values and flag the issue
for review later, or if, as suggested above, resolution of conflicts on the metadata (e.g. readable, writable and executable flags)
can be easily automated, then it could form part of the implementation.

Reintroducing the \emph{move} command can happen in a similar fashion.
Its main advantages include that 
\emph{move} commands are easier to review,
and if the synchronizer suggests or performs a move instead
of deleting a file and recreating it somewhere else, the user
can be assured that no information is lost.
During pre-processing, if an update detector is used,
\emph{move} commands still do not need to be part of the command sequences
the synchronizer operates on.
If the command sequences are derived from filesystem journals, they
can be encoded as separate delete and create operations to allow the existing reconciliation
algorithm to operate.

In a post-processing stage, deleted or overwritten file contents
can be paired up with created ones to reintroduce \emph{move}
commands in the output sequences and take advantage of their benefits.
The renaming of whole filesystem subtrees can also be detected
to further aid the user.

\section{Towards a Filesystem-Free Algebra}\mlabel{sec_algebra}

Similarly to \cite{NREC} we can consider creating an algebra over command sequences
which would enable us to draw conclusions about the behaviors of the sequences
without considering the underlying filesystems.
This would provide a secondary model of (filesystem) commands above (and independent of) the model of filesystems
defined in this paper. The new model would no longer describe filesystems as such,
but would use known relationships between sequences of commands as its starting point.

In this algebra, equivalence ($\equiv$) and extension ($\eqext$) would become algebraic relations
between sequences, and logical rules involving them
(e.g. $ A\equiv B \Rightarrow S\cc A\cc T\equiv S\cc B\cc T $) would be re-cast as inference rules.
In addition to these, the Rules listed in Section\nobreakspace \ref {section_axioms}
would become axioms that we accept as true, and from which other true statements can be derived
using the inference rules.
Such a system would allow us to deduce, for example, whether two sequences are equivalent
or one extends the other without reference to filesystems.

We note, however, that the Rules to be used as axioms 
specify the relationships between the nodes
the commands act on, which would require a more complicated set of symbols to represent
in the new algebra.
To avoid this, we think it will be more fruitful to regard commands (and sequences)
on different nodes as different, and, in effect, have nine times as many commands 
in the algebra as there are nodes in the namespace of the filesystems.
This will allow us to reduce the number of types of symbols in our algebra, and regard
the Rules not as axioms, but as templates for axioms
from which all axioms (for pairs of separate nodes, etc.) can be generated.

We expect that the soundness and completeness of such an algebra can be proven
similarly to the proofs described in \cite{NREC}.
Indeed, results in the current paper can serve as building blocks of such a proof:
derivations in Appendix\nobreakspace \ref {axiom_proof} for the Rules
in effect proves the soundness of all axioms,
and Lemma\nobreakspace \ref {simple_reorder_equiv} proves completeness in a limited sense.
It is also worth noting that the majority of proofs presented in this paper
do not actually refer to specific filesystems or the definition of commands,
but draw on known relationships between the commands themselves.
In other words, many proofs are transferable to the algebra that we describe here.

In a more complete algebra from which the correctness and completeness
of reconciliation could also be derived,
one would also include symbols, inference rules and axioms for 
$\worksmnil{\seqset{A}}$ and $\worksm{\seqset{A}}{\seqset{B}}$.
We see proving the soundness and completeness of this extended algebra
an intriguing problem that is worthy of further research.

\section{Conclusions and Further Research}\mlabel{sec_conclusion}

In this paper we presented an algebraic model to file synchronization,
and defined an update detector and a reconciliation algorithm.
With its extended command set the model was carefully designed to take
advantage of an inherent symmetry of filesystem commands,
which was enhanced by removing meta-information from directories
as well as the \emph{move} command.
These assumptions enabled us to prove that
the update detector algorithm is correct (Theorem\nobreakspace \ref {update_works}),
and that reconciliation is not only correct (Theorem\nobreakspace \ref {reconciliation_correct})
but also yields maximal results (Theorem\nobreakspace \ref {rec_is_complete}).

We introduced relations and concepts that have proven particularly
useful, including 
independent commands, %
simple sequences, %
and relationships between domains of sets of command sequences ($\worksmeqsign$)
that allowed us to formalize the correctness of reconciliation.
Among our partial results were interesting properties of these concepts,
for example,
that all permutations of simple sequences satisfying a partial order are equivalent (Lemma\nobreakspace \ref {simple_reorder_equiv}),
that command sequences can be transformed into simple sequences using syntactical rules (Theorem\nobreakspace \ref {can_simplify}),
or that the combination of independent sequences is defined wherever the components are defined (Lemma\nobreakspace \ref {combine_independent_sequences}).
We believe these results will prove useful outside their current usage as well.

Our results can be immediately applied to practical filesystem synchronizers,
which could benefit from e.g. recording changes using the extended command set,
and from the theoretical background that guarantees their correct operation.
Furthermore, with the encoding and decoding steps described in Section\nobreakspace \ref {sec_multidir},
the reconciliation algorithm can be extended to handle directory meta-information
and the \emph{move} command,
which makes our model applicable to more use cases and filesystems.

\myskip
Apart from constructing a filesystem-free algebra,
there are many other ways in which further work can extend the current results.
An important extension would be to
consider reconciling not only two, but more replicas in a single step and
prove the correctness and maximality of the algorithm proposed,
or show that it is impossible to satisfy these criteria.

A related problem is to extend the system and the proofs
to allow for cases where reconciliation cannot
complete fully
or if only a subset of the replicas are reconciled 
(e.g. due to network partitioning),
both of which would result in a state where different replicas
have different common ancestors, that is,
the updates specific to the replicas start from different points
in the update history of the filesystems.
Existing research can offer pointers as to how such cases can be modeled
in our algebraic system.
Parker et al. \cite{PPRS} and Cox and Josephson \cite{CJ}
describe version vectors (update histories) kept as metadata,
while Chong and Hamadi present distributed algorithms that allow incremental synchronization \cite{CH}.
Representing individual updates to files
in their modification histories (as described in \cite{CJ})
as separate commands could also enable an algebraic synchronizer to reconcile otherwise
conflicting updates and resolve partial reconciliations.

Future work could also investigate extending the 
model and algorithms to the $|\setd|>1$ case
so that directory metadata could be represented directly
as opposed to through encoding and decoding steps.

And finally, we hope that this work, together with \cite{NREC}, provides
a blueprint of constructing an algebra of commands for different storage protocols
(e.g. XML trees, mailbox folders, generic relational databases, etc.),
and of demonstrating the adequacy and completeness of the update and conflict detection and reconciliation
algorithms defined over it.
This, in turn, can offer formal verification of the algorithms underlying
specific implementations in a variety of synchronizers.
Alternatively, by generalizing the parent--child relationships between filesystem nodes,
the demonstrated properties of minimal sequences of commands
and domains of sets of sequences ($\worksmeqsign$)
may also contribute to future research into algebraic structures
constrained by predefined sparse connections between their elements.

\section{Acknowledgments}

I thank L.~Csirmaz for his invaluable input on this paper,
and Bill Zissimopoulos for his observations on algebraic file synchronization that triggered this research.
I am indebted to the anonymous reviewers of the previous version of this paper
for their helpful and detailed remarks
that helped improve the presentation of the paper
and make it accessible for a wider audience.

\appendix %

\section{Related Work}\mlabel{sec_relatedwork}

\subsection{Liberal and Conservative Reconciliation}

We consider the reconciliation algorithm described here an improvement over
the one derived during our previous research \cite{NREC}
as the previous algorithm not only fails to propagate all possible commands
wherever it is possible (that is, where it does not break the filesystem),
but the current algorithm is also simpler.
This is because the previous reconciliation algorithm excludes
commands from being propagated which must be preceded by a command that conflicts.

The former observation is supported by Berzan and Ramsey, who in \cite{CBNR} 
describe different general reconciliation policies.
The liberal (maximal) policy propagates all updates to all replicas where
the update does not break the filesystem, while a conservative policy
refrains from updating any node that is below a node with conflicting commands.
They show that the reconciliation algorithm in \cite{NREC} implements
an intermediate policy as one can easily construct two- or three-replica scenarios
where an update could clearly be propagated, but it is excluded.
To rephrase the example in \cite{CBNR}, consider the following two
update sequences that have been applied to replicas $\FS_A$ and $\FS_B$:
\begin{align*}
A&=\caaa{\empt}{\valdi}{\pn}\cc\caaa{\empt}{\valf}{n} \\
B&=\caaa{\empt}{\valdii}{\pn}
\end{align*}
(\cite{NREC} did not require all directories to have the same value).
Clearly $\caaa{\empt}{\valf}{n}$ could be applied to $\FS_B$, but it is not as
it must be preceded by creating the directory, which conflicts with the same command in $B$
due to the different value.
(Introducing a third replica which has not changed at all further complicates the picture.)
The current algorithm no longer needs to specify that there can be no conflicts
on preceding commands, and, to use the above terminology, implements a fully liberal policy.

\subsection{Comparing Definitions of Conflicts}

As a test of the proposed reconciliation algorithm, 
we compare our definition of conflicting
updates to how conflicts are defined by Balasubramaniam and Pierce
in their state-based approach implemented in the Unison synchronizer \cite{BP}.
As we noted in \cite{NREC}, the update detector they describe provides a safe estimate of nodes
(paths) at which updates occurred.
It marks some nodes as \emph{dirty} in a way that we know that at non-dirty nodes
the filesystems (replicas) have not changed between their common original state and current state
(see Definition 3.1.1 in \cite{BP}).
The \emph{dirty} marks are also up-closed, that is, all ancestor nodes of a dirty node
are also dirty (Fact 3.1.3 in \cite{BP}).
And finally, in our notation, 
a conflict is detected between replicas $\FS_A$ and $\FS_B$ at node $n$
if $n$ is marked as dirty in both $\FS_A$ and $\FS_B$, and
$\FS_A(n)\neq\FS_B(n)$, and $n$ does not point to a directory in both replicas.
(In other words, 
$[\FS_A(n)]\neq[\vald]$ or $[\FS_B(n)]\neq[\vald]$; see section 4.1 in \cite{BP}.
Let us note that there is an alternative approach to defining an algorithm for
the same synchronizer by Pierce and Vouillon in \cite{PV}.)

It can be easily seen that due to an edge case, 
not all conflicts detected based on the above definition
entails a conflict based on our system; that is, \cite{BP} describes a more
conservative policy.
We use the example we described in \cite{NREC},
where $\FS(\pn)$ is a directory and $\FS(n)$ is a file, and
the two replicas are derived in the following way:
\begin{gather*}
\FS_A = (\cfba{n} \cc \cdba{\pn}) \aFS \\
\FS_B = \cfba{n} \aFS.
\end{gather*}
Then, $\pn$ is dirty in both replicas:
in $\FS_A$ it was modified, and in $\FS_B$ one of its descendants was modified.
Moreover, $\FS_A(\pn)\neq\FS_B(\pn)$ and $[\FS_A(\pn)]\neq[\vald]$ as it is empty.
Therefore, a conflict is detected at $\pn$.
(This behavior is preserved in the more recent Harmony synchronizer.
See ``delete/delete conflicts'' in \cite{PSG,FGKPS}.)
Our reconciliation algorithm detects no conflicts;
instead, it propagates $\cdba{\pn}$ to $\FS_B$, which we think is as expected
and desired.

At the same time, it can be shown that if our command-based reconciler
detects a conflict, it entails a conflict in the state-based reconciler.
We note here that Balasubramaniam and Pierce also suppose that all directories are equal,
therefore, as elsewhere, we are safe to continue to assume that $|\setd|=1$.

\begin{proof}
Let $A$ and $B$ be two simple sequences returned by the update detector
for the two replicas $\FS_A$ and $\FS_B$.
A conflict between the commands $\cxynv\in A$ and $\czwmv\in B$
means that even though $\cxynv\in A\setminus B$
and $\czwmv\in B\setminus A$, they cannot be included in
$\reca$ and $\recb$, respectively, because
$\cxynv\nindep\czwmv$ (see Definition\nobreakspace \ref {def:reconciliation}).

From Corollary\nobreakspace \ref {incomparable_is_independent} we therefore know that $n\nunrel m$.
Without loss of generality we can assume that $n\descendantEq m$.
From this we see that $n$ is \emph{dirty} in both $\FS_A$ and $\FS_B$,
as the filesystem changes at $n$ in $\FS_A$ and at $n$ or its descendant in $\FS_B$
(and none of the commands are assertion commands).

Now we only need to show that $\FS_A(n)\neq\FS_B(n)$, because from this 
and $|\setd|=1$ we will also
know that at least one of these values is not a directory,
and therefore there is a conflict at $n$ in the state-based reconciler.

If $n=m$, this follows from the fact that $\cxynv\neq\czwmv$ 
(because they are not in $A\cap B$),
as from Lemma\nobreakspace \ref {worksinputmatch} we have $[x]=[z]$,
and therefore necessarily $y\neq w$,
that is, $\FS_A(n)=y\neq w=\FS_B(n)$.

If $n\descendant m$ and there is no command on $n$ in $B$, then 
because $\cxynv$ is not an assertion command, we again have
$\FS_A(n)\neq\FS(n)=\FS_B(n)$, where $\FS$ is the common ancestor of the replicas.
Finally, if there is a command on $n$ in $B$, then 
from $\cxynv\in A\setminus B$ we know
it must be different from $\cxynv$, and similarly to the first case,
we have $\FS_A(n)\neq\FS_B(n)$.
\end{proof}

\section{Proofs for Rules}\mlabel{axiom_proof}

In this section we derive the Rules listed in
Lemma\nobreakspace \ref {rules_lemma} from our filesystem model.
In the proofs, we may say that a filesystem
is \emph{broken at $n$}
if after applying a command, it violates the tree property
because
$n$ is not a dictionary, but has at least one non-empty child.

It follows from the definition of the tree property that
whether a filesystem is \emph{broken at a node $n$} is determined only
by the values at $n$ and the children of $n$.

Below we restate the Rules defined in
Lemma\nobreakspace \ref {rules_lemma}.

\begin{myaxproof}
\axaxseparatecommute
\end{myaxproof}

\begin{proof}
This proof is similar to the first case in the proof of Lemma\nobreakspace \ref {combine_independent_commands}.
However, this result is needed for Lemma\nobreakspace \ref {independent_details} on which that proof rests.

We use an inverse proof and assume $\cxynv\cc\czwmv \nequiv \czwmv\cc\cxynv$.
Since the values in the resulting filesystems are the same,
this is only possible if, for an initial non-broken filesystem $\FS$,
one side results in a broken filesystem, and the other side does not.
Without loss of generality, we can assume
$(\cxynv\cc\czwmv)\aFS$ is not broken, but $(\czwmv\cc\cxynv)\aFS$ is.
This means that either $\czwmv\aFS$ is already broken, or it is not, and applying $\cxynv$
breaks the filesystem.

In the former case 
$\czwmv\aFS$ must be broken at $m$ or $\parentf{m}$, since it only changed at $m$.
Let us look at the first case.
Since $n\indep m$, we know that the parent and children of $m$ have the same value
in $\FS$ as they do in $\cxynv\aFS$. However, this is a contradiction,
as applying $\czwmv$ to $\FS$ leads to a broken filesystem,
but applying it to $\cxynv\aFS$ (from the left side) does not.

Therefore $\czwmv\aFS$ must be broken at $\parentf{m}$.
By definition this means that its value at $\parentf{m}$ cannot be a directory.
Also, reasoning similar to the above shows that this is only possible if $\cxynv$ changes
the environment of $\parentf{m}$, which, since $n\indep m$, is only possible
if $n$ and $m$ are siblings.

Since then the value at $\parentf{n}=\parentf{m}$ is not changed by either command,
we know it cannot be a directory in $\FS$, either, and that therefore
$\FS$ is empty at all children of the parent node.
Since by assumption $\czwmv\aFS$ is broken, $[w]\neq[\empt]$ must hold.
However, this is a contradiction, as this would necessarily mean that
the left side, $\czwmv(\cxynv\aFS)$ must also be broken.

\begin{sloppypar}
We can proceed in the same fashion if $\czwmv\aFS$ is not broken, but
$(\czwmv\cc\cxynv)\aFS$ is.
\end{sloppypar}
\end{proof}

\begin{myaxproof}
\axaxseparatenobreaks
\end{myaxproof}

\begin{proof}
It is easy to see that over any $\setn$ one can construct a filesystem that neither command breaks.
Since $n\indep m$, it is always possible to set all descendants of both $n$ and $m$ empty,
and have a directory at all ancestors of both $n$ and $m$. In such positions any values are permissible,
so neither command will break the filesystem.
\end{proof}

\begin{myaxproof}
\axaxsamebreaks
\end{myaxproof}

\begin{proof}
This is trivial, as either $\cxynv$ breaks the filesystem, or, if it does not, then
we know $[\cxynv\aFS(n)]=[y]\neq [z]$, and therefore $\czwnv$ will break the filesystem.
\end{proof}

\begin{myaxproof}
\axaxsameemptyseq
\end{myaxproof}

\begin{proof}
From the conditions we see that for every $\FS$,
$\FS$ and $(\cxynv\cc\czwnv)\aFS$ have the same values at every node,
while the latter can still be broken if $\cxynv\aFS$ is broken.
This is equivalent to saying that where the $\cxynv\cc\czwnv$ function is defined,
it is the identity function, that is, it is extended by $\emptyseq$.
\end{proof}

\begin{myaxproof}
\axaxsamesinglec
\end{myaxproof}

\begin{proof}
It is easy to see that where both $\cxynv\cc \czwnv$ and $\cxwnv$ are defined
(they do not break the filesystem), they are equivalent.
What remains to show is that they break the same set of filesystems.
If $(\cxynv\cc \czwnv)\aFS$ is not broken, then we know that neither of
$\FS$, $\fsreplacement{\FS}{n}{y}$ or $\fsreplacement{\FS}{n}{w}$ is broken,
and therefore $\cxwnv$ does not break $\FS$.
Conversely, if $\cxwnv$ does not break $\FS$, then 
we know that $\FS$ and $\fsreplacement{\FS}{n}{w}$ are not broken.
Also, as
either $[x]\neq[\empt]$ or $[w]\neq[\empt]$,
there must be a directory at $\FS(\parentf{n})$,
and so $\fsreplacement{\FS}{n}{y}$ cannot be broken, either,
as any value is permissible at $n$.
This is equivalent to saying that $\cxynv\cc \czwnv$ does not break the filesystem.
We note that if $\cxwnv$ is an assertion command, then it extends $\cxynv\cc\czwnv$
instead of being equivalent to it.
\end{proof}

\begin{myaxproof}
\axaxdirectchildbreaks
\end{myaxproof}

\begin{proof}
If neither of the commands is an assertion command, then the claim follows from
Lemma\nobreakspace \ref {simple_distant_pairs} in the following way.
We use an inverse proof and assume that $\cxypnv\cc\czwnv$ is not a construction pair,
none of the commands is an assertion command, and $\wrksx{\cxypnv\cc\czwnv}$.
Therefore Lemma\nobreakspace \ref {simple_distant_pairs} applies to this sequence, and the two commands
must form a construction or destruction pair. This is a contradiction as due to
the relationship between the nodes, they cannot form a destruction pair, and
they do not form a construction pair by assumption.

Otherwise, let $\FS_1$ be $\cxypnv\aFS$, and $\FS_2$ be $(\cxypnv\cc\czwnv)\aFS$.
If $\cxypnv=\cbba{\pn}$ (the only assertion command it can be given the conditions),
then either $\FS_1$ is broken, or
$\FS(\pn)=\FS_1(\pn)=\FS_2(\pn)=\empt$.
Since $\czwnv\neq\cbba{n}$, we know that either $\czwnv$ breaks $\FS_1$,
or, depending on $z$ and $w$, either $\FS_1(n)\neq\empt$ or $\FS_2(n)\neq\empt$.
In any case $\FS_2$ will be broken at $\pn$ which is empty but at one point has a non-empty child.

Finally, if $\czwnv=\cdda{n}$, we can proceed in a similar way,
as since $\cxypnv\neq\cdda{n}$, we know at one point the value at $\pn$ is not a directory.
\end{proof}

\begin{myaxproof}
\axaxdirectparentbreaks
\end{myaxproof}

\begin{proof}
This proof follows the same logic as that of Rule\nobreakspace \ref {ax_directchild_breaks},
using destruction pairs instead of construction pairs.
\end{proof}

\begin{myaxproof}
\axaxdistantrelbreaks
\end{myaxproof}

\begin{proof}
We use an inverse proof and
suppose $\cxynnv$ and $\czwnv$ satisfy the conditions,
and there is an $\FS$ for which $(\cxynnv\cc\czwnv)\aFS\neq\fsbroken$.
If $\czwnv\neq\cbba{n}$, then either $\FS(n)\neq\empt$ or $\czwnv\aFS(n)\neq\empt$,
and therefore $\FS(\parentf{n})$ must be a directory for the filesystem to satisfy
the tree property at all times.
However, from $\cxynnv\neq\cdda{\nn}$ we also know that either $\FS(\nn)$ 
(before applying $\cxynv$) or $\cxynv\aFS(\nn)$ (after applying $\cxynv$)
is not a directory. Depending on which is true, we have a point in the sequence
of applying the commands where the filesystem cannot satisfy the tree property
and therefore becomes broken as $\parentf{n}$, a descendant of $\nn$, contains
a directory and so must all its ancestors.
\end{proof}

\begin{myaxproof}
\axaxchildassert
\end{myaxproof}

\begin{proof}
Since $\cxynnv\neq\cdda{\nn}$, we know that
either $\cxynnv\aFS$ is broken, or, depending on $x$ and $y$,
either $\FS(\nn)\neq\vald$, or $\cxynnv\aFS(\nn)\neq\vald$.
In the latter two cases $\FS$ must be empty at all descendants of $\nn$
for it to satisfy the tree property.
The assertion command $\cbba{n}$ will therefore not break $\FS$.
\end{proof}

\begin{myaxproof}
\axaxparentassert
\end{myaxproof}

\begin{proof}
This proof follows the same logic as that of Rule\nobreakspace \ref {ax_child_assert}.
\end{proof}

\begin{myaxproof}
\axaxassert
\end{myaxproof}

\begin{proof}
The two assertion commands $\cdda{n}$ and $\cbba{n}$ either
break a filesystem, or leave it in the same state.
\end{proof}

\end{document}